\let\color@begingroup\relax
	\let\color@endgroup\relax}{}%
\def\fix@ieeecolor@hbox#1{%
	\hbox{\color@begingroup#1\color@endgroup}}
\patchcmd\@makecaption{\hbox}{\fix@ieeecolor@hbox}{}{\FAILED}
\patchcmd\@makecaption{\hbox}{\fix@ieeecolor@hbox}{}{\FAILED}
\newtheorem{remark}{Remark}{}
\newtheorem{theorem}{Theorem}
\newtheorem{proposition}{Proposition}
\newtheorem{lemma}{Lemma}
\newtheorem{definition}{Definition}
\newtheorem{assumption}{Assumption}
\newtheorem{algorithm}{Algorithm}
\def\BibTeX{{\rm B\kern-.05em{\sc i\kern-.025em b}\kern-.08em
T\kern-.1667em\lower.7ex\hbox{E}\kern-.125emX}}
\begin{document}
	\title{Active Secure Neighbor Selection in Multi-Agent Systems with Byzantine Attacks}

	\author{Jinming~Gao,
	Yijing~Wang,
	Wentao~Zhang,~\IEEEmembership{Member,~IEEE},
	Rui~Zhao,~\IEEEmembership{ Member,~IEEE}, Yang~Shi,~\IEEEmembership{Fellow,~IEEE}, and  
	Zhiqiang~Zuo,~\IEEEmembership{Senior Member,~IEEE}
	\thanks{This work was supported by the National Natural Science Foundation of China
		No. 62173243.}
	\thanks{J.  Gao, Y.  Wang, and Z. Zuo are with the Tianjin Key Laboratory of Intelligent Unmanned Swarm Technology and System, School of Electrical and Information Engineering, Tianjin
		University, 300072, P. R. China.
		(e-mail: gjinming@tju.edu.cn; yjwang@tju.edu.cn; zqzuo@tju.edu.cn)}
	\thanks{R.~Zhao is with the Department of Electrical Engineering, City University of Hong Kong, Hong Kong SAR, China. (e-mail: ruizhao@tju.edu.cn, ruzhao@cityu.edu.hk)}
	\thanks{W. Zhang was with Continental-NTU Corporate Lab, Nanyang Technological University, 639798, Singapore, and will join the School of Robotics,
		Hunan University, Changsha 410082, China (e-mail: wtzhangee@tju.edu.cn, wentao.zhang@ntu.edu.sg)}	
	\thanks{Y. Shi is with the Department of Mechanical Engineering, University of Victoria, Victoria, BC V8W 2Y2, Canada. (e-mail: yshi@uvic.ca)}
}
	
	\maketitle
	
	\begin{abstract}
This paper investigates the problem of resilient control  for multi-agent systems in the presence of Byzantine adversaries via an active secure neighbor selection framework. A pre-discriminative graph is first constructed to characterize the admissible set of candidate neighbors for each agent. Based on this graph, a dynamic in-neighbor selection strategy is proposed, wherein each agent actively selects a subset of its pre-discriminative neighbors. The number of selected neighbors is adjustable, allowing for a trade-off between communication overhead and robustness, with the minimal case requiring only a single in-neighbor. The proposed strategy facilitates the reconstruction of a directed spanning tree among normal agents following the detection and isolation of Byzantine agents. It achieves resilient consensus without imposing any assumptions on the initial connectivity among normal agents. Moreover, the approach significantly reduces communication burden while maintaining resilience to adversarial behavior. A numerical example is provided to illustrate the effectiveness of the proposed method.
	\end{abstract}
	\begin{IEEEkeywords}
Multi-agent systems, security, Byzantine attacks, active
secure neighbor selection.
	\end{IEEEkeywords}

\section{Introduction}
Significant progress in cyber-physical systems (CPSs) has been driven by advances in communication and computing technologies  \cite{103680,watari2023duck}. However, the open setting in cyber space poses security challenges for real-world deployments, as exemplified by the 2010 Stuxnet attack on Iran's nuclear facilities \cite{mo2015physical} and the 2014 Havex instrusion that disabled hydropower dams over SCADA networks \cite{Maitra2015}.

Given the high security vulnerability of CPSs, resilient defense mechanisms are paramount for ensuring normal system operation.  Substantial research has focused on  attack detection and identification  \cite{pasqualetti2013attack,Gallo2020}, while recent efforts increasingly emphasize  attack mitigation \cite{pirani2023graph}.   For example,  \cite{zhao2023active} proposed  an active switching approach to  defend against  denial of service attacks. To ensure resilient control in distributed systems, redundancy-based schemes have been developed by leveraging their structural characteristics \cite{pirani2023graph}.  In this way, redundant security components will be  used, such as sensors \cite{lu2023distributed} or communication  links \cite{leblanc2013resilient}, to achieve security estimation or resilient  control.

Multi-agent systems (MASs), as a prominent class of  CPSs, 
have  received enormous attention owing  to their widespread 
applications \cite{olfati2007consensus,chen2024cooperative}. Unlike the  centralized systems, MASs comprise 
multiple autonomous agents that can be sparsely distributed and easily scaled.  The applications include intelligent traffic systems \cite{10014016}, smart grid systems \cite{10018476} and multi-sensor networks \cite{zheng2018average}.

Yet, due to their scalability and complexity restrictions, MASs are intrinsically more susceptible to adversarial manipulation than  centralized systems  \cite{zhang2022much,zuo2021resilient}. Guaranteeing resilient consensus under such conditions is therefore a pressing  challenge. The existing work on implementing resilient consensus  mainly falls into  two categories: detector-based approaches and mean-subsequence-reduced (MSR) algorithms. The first one  originates  from  the diagnosis mechanism, {which requires each agent to be equipped with a detector in order to locate and isolate  malicious agents.} Its  essential idea  is to utilize the interaction outcoming   among  neighboring agents. Representative schemes include  reputation-based detector  \cite{zegers2021event}   to expose Byzantine agents, consensus-driven filters to discard compromised data \cite{mustafa2020resilient},  and two-hop information protocols {to}  suppress the intrusion of attacks and restore synchronization \cite{yuan2021secure, luo2023secure}.   In MSR algorithms, every benign agent discards extreme values from its neighboring agents  before { the state is updated, under the assumption that the number of adversaries does not exceed a known bound} \cite {leblanc2013resilient,ishii2022overview,8795564}.  Specifically, each normal agent  {removes}     all {potential}  outliers  {in accordance with the network's robustness constraints\cite{usevitch2020determining}.}  Furthermore, MSR algorithms have  been extended to resilient convex-optimization problems via integer programming \cite{sundaram2018distributed}. The problem of resilient formation control for multiple robots has also been investigated in \cite{10354416}.
It is worth emphasizing that the  MSR algorithms  are  convenient for practical operation   and can be fully distributed.

Both paradigms, however, hinge on abundant communication among normal agents. For the detector-based defense approaches, most of them require that, after isolating malicious agents, the remaining benign agents stay connected within the original communication graph \cite{luo2023secure,ZHAO2023110934}. {This progress simultaneously creates  severe vulnerabilities  because an attacker with access to the global communication topology can deliberately target critical agents, thereby disrupting connectivity.}  
For MSR algorithms, the concept of graph robustness has been  presented  to enhance graph resilience  for the purpose of avoiding   the aforementioned drawbacks   \cite{leblanc2013resilient,ishii2022overview,an2024mean}. It should be pointed out that  the lack of detectors can easily cause  false isolation of normal ones, resulting in unnecessary losses and default of critical  information. At the same time, it also increases  the occupation  of communication resources due to the edge-redundancy based defense framework. Actually, given {the potentially high cost} of communications in various applications, it is {crucial} to investigate how to achieve resilience while minimizing communication\cite{pirani2023graph}. In other words, this work provides some guidelines  for maintaining  secure operation under low communication resources.

Motivated by the {above}  discussions, two {persisting limitations are recognized}: (i) the stringent network connectivity requirements imposed on normal agents, and (ii) the excessive communication overhead inherent in redundancy-based defenses.  To  {address}  both challenges,  we propose  an active secure neighbor selection (ASNS) strategy to achieve resilient consensus for MASs subject to  Byzantine attacks. {
While this work is partially inspired by \cite{shao2023distributed}, it is worth noting that the method in \cite{shao2023distributed} neither considers security issues nor provides defense mechanisms. 
Therefore, an active neighbor selection mechanism under adversarial conditions should be considered.} 
The crux of the challenge lies in achieving network  connectivity among normal agents   through local neighbor selection rules   while eliminating the influence of attacks. Because connectivity quantifies how components stay interoperable, it {is normally  measured using metrics including  
 vertex/edge connectivity\cite{DION20031125} and  graph robustness \cite{usevitch2020determining}, etc.}  
 The  contributions of this paper   can be summarized as follows:

\begin{enumerate}
\item By exploiting a pre-discriminative graph, the proposed active secure neighbor selection (ASNS) strategy guarantees resilient consensus by actively forming a directed spanning tree whenever the attack changes. Compared with \cite{luo2023secure} and  \cite{ZHAO2023110934}, the proposed strategy removes the requirement for the persistent-connectivity assumption that was needed for normal agents.
\item The ASNS strategy allows for a flexible number of selected neighbors in the communication graph,  which provides more possibilities to improve the performance of MASs over the MSR algorithms \cite{leblanc2013resilient,sundaram2018distributed}. Based on this framework, minimum resilient communication with reduced overhead is further achieved through the selection of one in-neighbor.

\item The effectiveness of our work is validated through examples
involving dynamic Byzantine attacks.  On  low-robustness communication graphs, the ASNS strategy exhibits stronger resilience than MSR algorithms \cite{leblanc2013resilient,sundaram2018distributed}. When network connectivity among normal agents is disrupted, it also achieves better recoverability than \cite{luo2023secure,ZHAO2023110934} by reconstructing the topology.
\end{enumerate}

The remainder of this paper  is organized as follows. Section
$\rm\ref{GOsec2}$ reviews some notations  and graph-theoretic preliminaries. The system description and attack model are formulated  in Section 
$\rm\ref{GOsec3}$. The ASNS strategy {along}   with its analytical guarantee is presented in {Section}  $\rm\ref{GOsec4}$. {Section $\rm\ref{GOsec5}$  provides numerical experiments that demonstrate the effectiveness of the proposed methodologies,}  and {Section $\rm\ref{GOsec6}$ concludes the paper with some remarks on future research.}

\section{Preliminaries}\label{GOsec2}
Define    $\mathbb{R}^{n}$   as the set of  $n$-dimensional real vectors and  $\mathbb{R}^{n\times m}$ the set of  $n\times m$-dimensional real matrices. $\mathbb{Z}^{+}$ denotes the set of positive integers.
$\bm 1$ and  $\bm I$ represent  a column   vector whose entries are all 1 and  an identity matrix   with appropriate dimensions.   ${\rm diag}\left\{x\right\}$ stands for a   diagonal matrix with diagonal entries  being the elements of vector  $x$.  $\left| \cdot \right|$ represents 
the cardinality of a set. For some positive  integer  $r$,  let   $\underline{r}\triangleq \left\{ 0,\dots ,r \right\}$.   Moreover, the  $i$-th element of vector $x$ is written  as $x_{(i)} \in \mathbb{R}$.

Let ${\mathcal{G}(k)}=({\mathbb{E}(k)},\mathbb{V})$ be a directed graph with  $N$ nodes, where $\mathbb{E}(k)$ is the  edge set and $\mathbb{V}$ is the  node  set. A directed edge $(j,i)\in \mathbb{V}$ signifies  an ordered edge connection from $v_{i}$ to $v_{j}$, where nodes $v_i$ and $v_j$ are called  parent  node and child  node respectively. If  a node has  ordered paths to  preserve  all  other nodes in the graph, it is called   the rooted node.  {$N_i^{+}(k)$} and {$N_{i}^{-}(k)$} are    sets of  in-neighbors and out-neighbors for  agent $i$ at time $k$.  {$A(k)=[a_{ij}(k)]\in \mathbb{R}^{N\times N}$} is  a weighted
adjacency matrix: $a_{ij}(k)\neq 0$ if {$j\in N_i^{+}(k)$} and $a_{ij}(k)=0$ otherwise for $j\neq i$. Moreover,  $a_{ii}(k)= 0$. Define the  Laplacian matrix of {$\mathcal{G}(k)$}
as {$L(k)=[l_{ij}(k)]\in \mathbb{R}^{N\times N}$} in which   $l_{ij}(k) =-a_{ij}(k)$ $(i\neq j)$ and {$l_{ii}(k) =\sum_{j\in N_i^{+}(k)}{a_{ij}(k) \,\,}$}.  Table $\rm{\ref{al23}}$ summarizes some other important notations.

\begin{table}[H]
	\caption{Notations}
	\label{al23}
	\renewcommand\arraystretch{1.6}
	\setlength{\tabcolsep}{3pt}
	\begin{tabular}{p{40pt} p{200pt}}
		\hline
		Symbol& Definition\\
		\hline
		$\bar{\mathcal{B}}$ & The set of attack-admissible agents\\
        $\bar{\mathcal{A}}$ & The normal agent set \\
		${\mathcal{B}}(0,k)$ & The set of Byzantine agents during $[0,k]$ \\
		${\mathcal{A}}(0,k)$ & The set of normal agents staying during $[0,k]$, i.e.,$\mathbb{V}\backslash {\mathcal{B}}(0,k)$ \\
		$\mathcal{G}(k)$ & The graph of agents in $\mathbb{V}$ at time $k$ \\
		$\mathcal{G}_{\mathcal{A}}(k)$ & The subgraph of agents in ${\mathcal{A}}(0,k)$ corresponding  to $\mathcal{G}(k)$ at time $k$ \\
		$\mathcal{G}_{pre}(k)$ & The pre-discriminative graph at time $k$ \\	
		$\mathcal{G}_{\mathcal{A}\text{-}pre}(k)$ & The subgraph of agents in ${\mathcal{A}}(0,k)$ corresponding  to $\mathcal{G}_{pre}(k)$ at time $k$ \\	      
		$L_{\mathcal{A}\text{-}pre}(k)$ & The corresponding Laplacian  matrix of $\mathcal{G}_{\mathcal{A}\text{-}pre}(k)$  at time $k$ \\
		$\Omega(k)$ & The   state  set  of agents in $\mathcal{A}(0,k)$\\
       $\Xi(k)$& The convex hull formed by the states   in $\Omega(k)$\\
		\hline
	\end{tabular}
\end{table}

\section{Problem Formulation}\label{GOsec3}
\subsection{System Model}\label{GOsec223}
{For an MAS, an attack-free agent $i$ can be modeled as\cite{ren2008distributed}:}

\begin{equation}\label{GOeq499}
	\begin{aligned}
		x_i\left( k+1 \right) =x_i\left( k \right) +\epsilon \sum_{j\in N_i^{+}\left( k \right)}{a_{ij}\left( k \right) \left( x_j\left( k \right) -x_i\left( k \right) \right),}
	\end{aligned}
\end{equation}
where $x_i\left( k \right) \in \mathbb{R}^n$ is the state,  $\epsilon \in \mathbb{R}$ is the  step-size  with $\epsilon \in \left(0,\mathop {\frac{1}{\sum_{j\in N_i^{+}}{a_{ij}(k) \,\,}}}\right)$.

For convenience, the  system $(\ref{GOeq499})$ is rewritten  as 
\begin{equation}\label{GOeq4919}
	\begin{aligned}
		x_i\left( k+1 \right) =x_i\left( k \right) -\epsilon \sum_{j\in \widetilde{N_i}\left( k \right)}{l_{ij}\left( k \right) x_j\left( k \right),}
	\end{aligned}
\end{equation}
where $\widetilde{N_i}\left( k \right) =N_i^{+}\left( k \right) \cup \left\{ i \right\}$. Thus its  augmented  form becomes  $x\left( k+1 \right) =\left( \bm I -\epsilon L\left( k \right)\otimes \bm I \right) x\left( k \right)$ where  $x(k)=[x_1^{\top}(k), x_2^{\top}(k),\dots,x_{N}^{\top}(k)]^{\top}$.

In what follows, the concept of  robustness   is  provided  to characterize  the connectivity  of a network.

\begin{definition}\label{GOde10k}
	\emph{($r$-reachable and $r$-robustness \cite{ishii2022overview})} In $\mathcal{G}=(\mathbb{E},\mathbb{V})$, given $r\in \mathbb{Z}^{+}$, a
	nonempty set $Q_{0}\subset  \mathbb{V}$ is  said to be  $r$-reachable, if there exists $i\in Q_{0}$ such that  $|N_i^{+}\backslash Q_0|\geq r$ where $N_i^{+}$ is the set of in-neighbors  of agent $i$. For any two nonempty disjoint   subsets $Q_{a}, Q_{b}\subset  \mathbb{V}$, $\mathcal{G}$ is $r$-robust if either of them is $r$-reachable. 
\end{definition}

\subsection{Attack Model}
  This paper focuses  on   the Byzantine attacks \cite{pirani2023graph},  which is a kind of   flexible attack strategies  on the  agent layer. It is capable of  transmitting  different values  to different neighbors at each time $k$. Here the   normal agent set is $\bar{\mathcal{A}}$  and  the set of attack-admissible agents is  $\bar{\mathcal{B}}$, that is,    $\bar{\mathcal{A} }\cup \bar{\mathcal{B} }=\mathbb{V}$ {and}  $\bar{\mathcal{A} }\cap \bar{\mathcal{B} }=\varnothing$. Let $\mathcal{B}_{i}(k)$  be    the set of Byzantine agents in $N_i^{+}\left( k \right)$ at time $k$ for agent $i$.  Moreover,   $\mathcal{B} (0,k)=\bigcup_{l\in \underline{k}}{\big( \bigcup_{i\in\mathbb{V}}{\mathcal{B}_{i\,\,}}(l)\big)}$   represents  the set of  Byzantine agents  from initial time $0$ to time $k$ and $\mathcal{A}(0,k)=\mathbb{V}\setminus\mathcal{B}(0,k)$. It is clear that  $\overline{\mathcal{A}}\subseteq \mathcal{A}(0,k)$. Let $\Omega(k)$ be  the   state set   of agents in $\mathcal{A}(0,k)$.  Define $\Xi(k)\triangleq\operatorname{Conv}(\Omega(k))$ as    the convex hull formed by the states   in $\Omega(k)$.

If agent $i$ is a Byzantine agent  at time $k$, then  \begin{equation}\label{GOeq4117}
	\begin{aligned}
		x^{a}_{ij}\left( k \right) =f_{ij}\left( k \right),~j\in 
		N_i^{-}(k),
	\end{aligned}
\end{equation}
where  $x^{a}_{ij}\left( k \right)  \in \mathbb{R}^n $ is the state  transmitted  from agent $i$ to its neighbor $j$ and  $f_{ij}\left( k \right)\in \mathbb{R}^n$ is the attack signal.

The following assumption is essential to the developments in this paper.

\begin{assumption}\label{GOas02} \hspace{-0.001cm} \cite{ishii2022overview,yuan2024resilient}
	($F$-local attack model) For each agent, there  are at most $F$ Byzantine  agents in its  in-neighbors.  The system cannot    be  attacked at the initial time.
\end{assumption} 

\begin{remark}
	The $F$-local attack  model includes  the $F$-total   strategy  which limits  the number of Byzantine agents  on a global scale to $F$. Besides, the $F$-local  model  is  more  suitable   for the  situation where the number  of misbehavior agents  varies with network size and  connectivity \cite{ishii2022overview}. {Actually,  such attacks pose a more severe threat.}
\end{remark}

To {identify}  potential anomalies,  we employ an attack detector that leverages  two-hop information \cite{YUAN2025111908,yuan2021secure,10102299}. Specifically, at every time $k\geqslant1$,  each   agent $i \in \mathbb{V}$ transmits the packet  $\left\{ x_i\left( k \right) ,\left\{ j,x_j\left( k-1 \right) \right\} _{j\in N_{i}^{+}(k-1)} \right\}$  to  its out-neighbors. During  the detection process,  for  each  normal agent $i \in \mathcal{A} (0,k)$, the detection strategy with respect to   agent $j\in {N_i^{+}}\left( k \right)$ admits  
\begin{equation}\label{GOeqw117}
\begin{alignedat}{1}
\left\{
  \begin{aligned}
    x_{j}(k) &\ne x_{j}(k-1) + \sum l_{jh}(k-1)x_h(k-1),~ j \in \mathcal{B}_{i}(k),\\
    x_{j}(k) &=   x_{j}(k-1) + \sum l_{jh}(k-1)x_h(k-1),~ j \notin \mathcal{B}_{i}(k).
  \end{aligned}
\right.
\end{alignedat}
\end{equation}
This control protocol-based detection approach  is partially  inspired by  \cite{yuan2021secure,ishii2022overview}.

\begin{definition}\label{GOde2}
		\emph{(Resilient Consensus) \cite{yan2022resilient}} For the Byzantine  attacks,  a multi-agent system is said to realize  resilient consensus if $\lim_{k\to\infty}\lVert{x}_{i}(k)-{x}_{j}(k)\rVert=0,~  \forall~i,j \in \bar{\mathcal{A}}$.
\end{definition}

The  {objective}  of this paper is to {develop}  an active secure neighbor selection strategy that ensures  resilient consensus while relaxing the restrictions on graph connection among normal agents with  {low   communication overhead.}

\section{Main Results}\label{GOsec4}
In this part, we  will propose  a defense framework for active secure neighbor selection. More specific, it consists of   two steps: 1) construction of pre-discriminative graph, and  2) design of active secure neighbor selection strategy.  These tasks will be addressed one by one.

\subsection{Construction of Pre-discriminative Graph}\label{GOsubsec112}
In this subsection,   a pre-discriminative graph is  constructed  to pave {the}  way for the secure  neighbor selection. To this end, we first introduce the concept of  pre-discriminative graph for all agents in $\mathbb{V}$. This graph specifies the range of neighbors that an agent can select from the normal ones.

\begin{definition}\label{GOde2}
	\emph{(Pre-discriminative Graph)} 
	For a given integer $k$,   the pre-discriminative graph is defined as  
$\mathcal{G}_{pre}(k)\triangleq(\mathbb{E}_{pre}(k),\mathbb{V})$, where  
the  set $\mathbb{E}_{pre}(k)$ comprises every edge through which an agent chooses normal neighbors. For each agent $i\in\mathbb{V}$, its  neighbor set in $\mathcal{G}_{pre}(k)$, {termed as}  the candidate neighbor set, is defined as 
$N_{i\text{-}pre}(k)\triangleq\{\, j\in \mathbb{V}\mid (j,i)\in \mathbb{E}_{pre}(k)\,\}$.
\end{definition}

\begin{figure*}[htbp]
	\centering
	\includegraphics[width=0.85\linewidth]{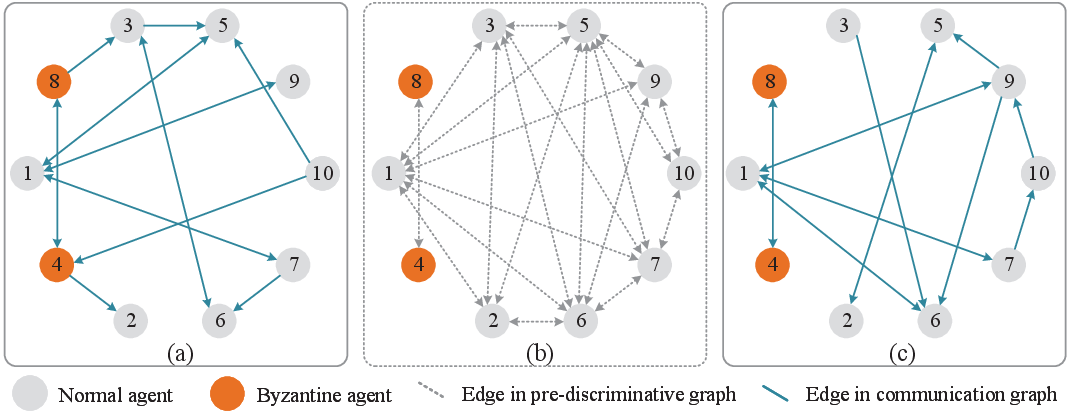}
	\caption{(a) $\mathcal{G}(k-1)$: the 
		communication   graph corresponding to time $k-1$ under attacks occurring at time $k$; (b)   $\mathcal{G}_{pre}(k)$:  the 
		pre-discriminative  graph at  time  $k$ ; (c)  $\mathcal{G}(k)$: the 
		communication   graph at  time  $k$ after the ASNS strategy with $\mathcal{G}(k)\subseteq \mathcal{G}_{pre}(k)$.}
	\label{GOfig9qqqq90}
\end{figure*}

\begin{remark}
	It is noted that  the actual communication topology  is not $\mathcal{G} 
	_{pre}(k)$;  rather,  it is a subgraph of $\mathcal{G} 
	_{pre}(k)$, i.e., $\mathcal{G} 
	(k)\subseteq\mathcal{G} 
	_{pre}(k)$,  as  illustrated  in Figs. $\ref{GOfig9qqqq90}$(b)-(c). Actually, the neighbor information associated with each agent in $\mathcal{G}_{pre}(k)$  reflects the reorganized range of available neighbors after the isolation strategy, providing reliable candidate agents for subsequent neighbor selection for $\mathcal{G}(k)$. 
	Consequently, the selected neighbor set satisfies $N_i^{+}(k) \subseteq N_{i\text{-}pre}(k)$. Fig. \ref{GOfig9qqqq90} depicts a ten-agent system. When  agents 4 and 8 are under attacks (see  Fig.~\ref{GOfig9qqqq90}(a)), the pre-discriminative graph is first  constructed as illustrated in Fig.~\ref{GOfig9qqqq90}(b). Then the communication graph is reconstructed in terms of the ASNS strategy  as shown in Fig.~\ref{GOfig9qqqq90}(c), confirming   $\mathcal{G}(k)\subseteq \mathcal{G}_{pre}(k)$.
\end{remark}

Next, the pre-discriminative graph $\mathcal{G}_{pre}(k)$ is constructed. Specifically,  the information of  $\mathcal{B}_{i}(k)$ is first  broadcasted to eliminate any possibility of establishing links between normal and compromised agents. Then the actual reconstruction of $\mathcal{G}_{pre}(k) $ is triggered only when new  Byzantine agents {are detected}, i.e.,   $\mathcal{A}(0,k) \neq \mathcal{A}(0,k-1)$ and set   $k=k_s$ where $s\in \mathbb{Z}^{+}$. This avoids frequent invocation of the subsequent updates to the pre-discriminative  graph and communication graph, thereby reducing defense overhead.   In particular,	each agent $i \in \mathcal{A}(0,k)$ rebuilds   undirected edges with the agents belonging to $N_{i\text{-}  pre}(0)\cap\mathcal{A}(0,k)$  for  $\mathcal{G}_{pre}(k)$. In this way, the attacked agents will be isolated from the normal ones to ensure the secure candidate neighbor range. \textbf{\emph{Algorithm} \ref{al01e}} summarizes  the specific steps.

Through the above construction process, it  can be  seen  that  $\mathcal{G}_{pre}(k)$ is   undirected. Let   $\mathcal{G}_{\mathcal{A}\text{-}pre}(k)$ denote  the  subgraph   induced by the agent set   ${\mathcal{A}}(0,k)$  within  	$\mathcal{G}_{pre}(k)$ at time $k$, as illustrated in  Fig. $\ref{GOfig23412w0}$(a).
Next, 
the network connectivity  of $\mathcal{G}_{\mathcal{A}\text{-}pre}(k)$, will be analyzed  to   pave a way  to  the connection performance  preservation  among normal agents  of the communication graph  $\mathcal{G}_{\mathcal{A}}(k)$ (see Fig. $\ref{GOfig23412w0}$(b))   after the ASNS strategy.

\begin{figure}[H]
	\vspace{-0.2cm}
	\centering
	\includegraphics[width=\linewidth]{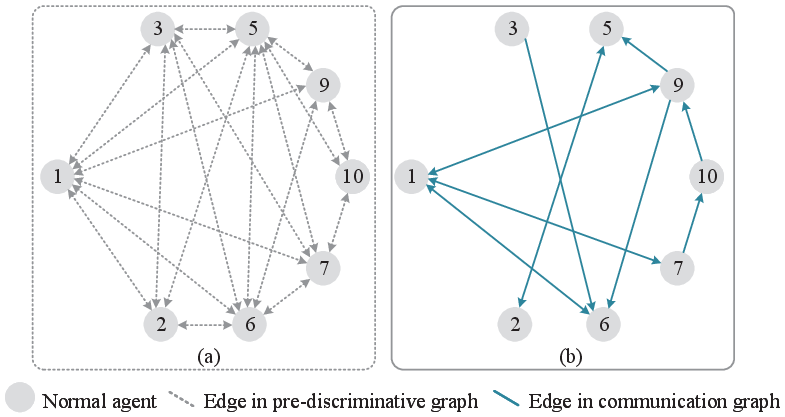}
	\caption{  (a) $\mathcal{G}_{\mathcal{A}\text{-}pre}(k)$: the subgraph  of agents in ${\mathcal{A}}(0,k)$ corresponding to $\mathcal{G}_{pre}(k)$ in Fig. $\ref{GOfig9qqqq90}$(b); (b) $\mathcal{G}_{\mathcal{A}}(k)$: the subgraph  of agents in ${\mathcal{A}}(0,k)$ corresponding to $\mathcal{G}(k)$ in Fig. $\ref{GOfig9qqqq90}$(c) with $\mathcal{G} _{\mathcal{A}}(k)\subseteq \mathcal{G} _{\mathcal{A}\text{-}pre}(k)$.}
	\label{GOfig23412w0}
\end{figure}

\begin{algorithm}
	\caption{ Pre-Discriminative Graph $\mathcal{G}_{pre}(k)$ Construction Strategy}  
	\label{al01e} 
	\begin{algorithmic}[1] 
		\For{$k>0$} 
		\For{each normal  agent $i \in \mathcal{A} (0,k-1)$}   
		\State  \hspace{-12mm}  \textbf{\textit{\underline{(Attack detection)}}} 
		\For{each $j \in N_i^{+}(k-1) \cap\mathcal{A} (0,k-1)$}   
		\State Implement the detection strategy  $(\ref{GOeqw117})$;
		\EndFor     
		\State  \hspace{-12mm} \textbf{\textit{\underline{(Broadcast)}}}
		\State  $\mathcal{B}_{i}(k)$ is broadcasted at time $k$; 	
		\If {$\mathcal{A}(0,k) \neq \mathcal{A}(0,k-1)$} 
		\State \hspace{-17mm}\textbf{\textit{\underline{(Graph construction)}}}         
		\State  Construct  the  pre-discriminative graph $\mathcal{G}_{pre}(k)$: 	each agent $i \in \mathcal{A}(0,k)$ rebuilds   undirected edges with  agents belonging to  $N_{i\text{-}pre}(k-1)\cap\mathcal{A}(0,k)$ for  $\mathcal{G}_{pre}(k)$.	
		\EndIf
		\EndFor 
		\EndFor  
	\end{algorithmic}  
\end{algorithm}

\begin{proposition}\label{GOth1}
	For MASs suffering from Byzantine attacks, under \textbf{\emph{Algorithm} \ref{al01e}}, $\mathcal{G}_{\mathcal{A}\text{-}pre}(k)$,  the subgraph  of $\mathcal{G}_{pre}(k)$ among all agents in  ${\mathcal{A}}(0,k)$ is connected if  the initial pre-discriminative graph $\mathcal{G}_{pre}(0)$ is $(F+1)$-robust. 
\end{proposition}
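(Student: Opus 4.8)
The plan is to argue by induction on the sequence of reconstruction times $0 = k_0 < k_1 < k_2 < \cdots$ at which the pre-discriminative graph is rebuilt, showing that $\mathcal{G}_{\mathcal{A}\text{-}pre}(k_s)$ remains connected for every $s$. The base case $s=0$ is immediate: by Assumption \ref{GOas02} the system is not attacked at the initial time, so $\mathcal{A}(0,0) = \mathbb{V}$ and $\mathcal{G}_{\mathcal{A}\text{-}pre}(0) = \mathcal{G}_{pre}(0)$, which is $(F+1)$-robust and hence connected (any $r$-robust graph with $r\ge 1$ is connected, since otherwise a connected component would be a $0$-reachable set violating $r$-reachability for both it and its complement).

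For the inductive step, suppose $\mathcal{G}_{\mathcal{A}\text{-}pre}(k_{s-1})$ is connected and that at time $k_s$ a nonempty set of new Byzantine agents $\mathcal{D} \triangleq \mathcal{A}(0,k_{s-1})\setminus\mathcal{A}(0,k_s)$ is detected and removed. The key structural fact I would use is that, by the construction in Algorithm \ref{al01e}, each surviving normal agent $i\in\mathcal{A}(0,k_s)$ reconnects to exactly $N_{i\text{-}pre}(0)\cap\mathcal{A}(0,k_s)$, so $\mathcal{G}_{\mathcal{A}\text{-}pre}(k_s)$ is precisely the subgraph of the original $(F+1)$-robust graph $\mathcal{G}_{pre}(0)$ induced on the vertex set $\mathcal{A}(0,k_s) = \mathbb{V}\setminus\mathcal{B}(0,k_s)$. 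Thus the whole claim reduces to a single statement about robust graphs: \emph{if $\mathcal{G}$ is $(F+1)$-robust and $S\subseteq\mathbb{V}$ is any set with $|S\cap N_i^{+}|\le F$ for every $i\in\mathbb{V}$ (which holds here because $\mathcal{B}(0,k_s)$ is a union of $F$-local Byzantine sets — this needs the observation that the $F$-local bound on in-neighbors in the original graph is inherited, which I must check carefully), then the induced subgraph $\mathcal{G}[\mathbb{V}\setminus S]$ is connected.} To prove this, suppose for contradiction that $\mathcal{G}[\mathbb{V}\setminus S]$ splits into two nonempty parts $Q_a, Q_b$ with no edges between them. Apply $(F+1)$-robustness of $\mathcal{G}$ to the disjoint pair $Q_a, Q_b$: one of them, say $Q_a$, is $(F+1)$-reachable, i.e., some $i\in Q_a$ has $|N_i^{+}\setminus Q_a|\ge F+1$ in $\mathcal{G}$. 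Since $i$ has no neighbors in $Q_b$ inside $\mathcal{G}[\mathbb{V}\setminus S]$, and its only possible neighbors outside $Q_a$ in $\mathbb{V}\setminus S$ would lie in $Q_b$, every one of those $\ge F+1$ out-of-$Q_a$ in-neighbors must lie in $S$ — contradicting $|N_i^{+}\cap S|\le F$. Hence $\mathcal{G}[\mathbb{V}\setminus S]$ is connected, completing the induction.

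I anticipate the main obstacle is not the robustness combinatorics, which is essentially the standard argument that $r$-robustness with $r\ge 1$ is preserved under vertex deletions bounded by $r-1$ per neighborhood, but rather rigorously justifying the bridging claim that $\mathcal{B}(0,k_s) = \bigcup_{l\le k_s}\bigcup_{i} \mathcal{B}_i(l)$ satisfies the per-neighborhood bound $|N_i^{+}(0)\cap\mathcal{B}(0,k_s)|\le F$ in the \emph{fixed} candidate graph $\mathcal{G}_{pre}(0)$. The $F$-local assumption bounds Byzantine in-neighbors at each time in the \emph{current} communication graph $\mathcal{G}(k)$, not cumulatively in $\mathcal{G}_{pre}(0)$; so I would need to argue either that the detector in \eqref{GOeqw117} identifies and permanently isolates each Byzantine agent at the first instant it misbehaves (so that a once-flagged agent never rejoins any normal agent's candidate set, keeping the cumulative count controlled), or invoke a standing hypothesis that the total set of ever-adversarial agents is itself $F$-local with respect to $\mathcal{G}_{pre}(0)$. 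Pinning down exactly which of these the authors intend — and making the detector's "detect-once, isolate-forever" behavior precise from \eqref{GOeqw117} and the broadcast step of Algorithm \ref{al01e} — is the delicate part; the rest is routine.
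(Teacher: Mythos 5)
Your proposal is correct and follows essentially the same route as the paper: both reduce the claim to the fact that deleting from the $(F+1)$-robust graph $\mathcal{G}_{pre}(0)$ a vertex set that meets each candidate neighborhood in at most $F$ agents leaves the induced subgraph on $\mathcal{A}(0,k)$ connected — the paper phrases this as ``$\mathcal{G}_{pre}(k)$ is $1$-robust, hence contains a spanning tree, hence (all surviving edges being undirected) is connected,'' while you prove the identical combinatorial fact directly from the definition of $r$-robustness. The ``delicate'' bridging claim you flag is resolved in the paper simply by positing it as the worst-case attack scenario, namely $\bigl|\bigl(\bigcup_{l\in[0,k]}N_{i\text{-}pre}(l)\bigr)\cap \mathcal{B}(0,k)\bigr|\leq F$ for every normal agent $i$ (i.e., Assumption \ref{GOas02} is read cumulatively on the candidate graph), so your concern corresponds exactly to what the paper assumes without further justification rather than to a defect in your argument.
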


\begin{proof}
     The worst-case attack  scenario within  the interval $[0,k]$ is considered. For each normal agent $i$, if $F < \lvert N_{i\text{-}pre}(0) \rvert$, it has exactly $F$ Byzantine candidate neighbors; otherwise, all candidate neighbors are compromised. This condition is formally expressed as 
\[
\bigg| \bigg( \bigcup_{l \in [0,k]} N_{i\text{-}pre}(l) \bigg) \cap \mathcal{B}(0,k) \bigg|
= \min \left\{ F, \lvert N_{i\text{-}pre}(0) \rvert \right\}.
\]

First, the preliminary form of the isolation process  in step 11 of \textbf{\emph{Algorithm} \ref{al01e}} is considered, where all directed edges from Byzantine agents to normal agents are removed by the defense strategy. Since $\mathcal{G}_{pre}(0)$ is $(F+1)$-robust,  under the above attack scenario and defense scheme, it follows  that $\mathcal{G}_{pre}(k)$ is $1$-robust. Notably, using  this isolation mechanism, all communication edges between $\mathcal{B}(0,k)$ and $\mathcal{A}(0,k)$ are directed from $\mathcal{A}(0,k)$ to $\mathcal{B}(0,k)$. Consequently, $\mathcal{G}_{\mathcal{A}\text{-}pre}(k)$ is at least $1$-robust in this setting.

Next, consider the actual  isolation mechanism in the ASNS strategy, where all undirected edges between Byzantine and normal agents are removed. Consequently,  it follows directly that $\mathcal{G}_{\mathcal{A}\text{-}pre}(k)$ is also  at least $1$-robust. Hence, $\mathcal{G}_{\mathcal{A}\text{-}pre}(k)$ contains a directed spanning tree. Since all edges among ${\mathcal{A}}(0,k)$ in $\mathcal{G}_{\mathcal{A}\text{-}pre}(k)$ are undirected, the graph is connected. Thus the proof is complete. 
\end{proof}

Subsequently, based on the pre-discriminative  graph constructed above, we perform a  preprocessing step on  system~(\ref{GOeq499}) so that the forthcoming ASNS strategy can rely on well-defined selection criteria. 

Recent research \cite{shao2023distributed} has  shown that, in semi-autonomous networks,  connectivity under neighbor selection  can be  determined by the normalized eigenvector linked to the smallest eigenvalue. This eigenvalue arises from a perturbed Laplacian matrix formed by  the original Laplacian matrix  and the input matrix.

To exploit this,  system (\ref{GOeq499}) is preprocessed to emulate a class of semi-autonomous ones. Specifically, choose  any agent   in  ${\mathcal{A}}(0,k)$ as a virtual leader with no influence caused by  external input   and  the model  in $(\ref{GOeq499})$ can be    transformed as     
\begin{equation}\label{GOeq487}
	\begin{aligned}
		x_i\left( k+1 \right) =&~x_i\left( k \right) +\epsilon\sum_{j\in N_i^{+}\left( k \right)}{a_{ij}\left( k \right) \left( x_j\left( k \right) -x_i\left( k \right) \right)}\\&-\sum_{p=1}^mb_{ip}({x}_i(k)-{u}_p(k)),
	\end{aligned}
\end{equation}
where $u_{p}(k)\in \mathbb{R}^{n}$ is the $p$-th virtual external input with ${u}_p(k)=x_i\left( k \right) (b_{ip}=1)$ in order to  offset the impact of virtual input on   system $(\ref{GOeq499})$. Here,   $b_{ip} \in  \mathbb{R}$ is the weight  coefficient of input: $b_{ip} = 1$ if agent ${i}$ is  designed as a virtual leader injected by $u_{p}(k)$ and   $b_{ip} = 0$ otherwise.

Thus, the augmented  dynamics of  agents in ${\mathcal{A}}(0,k)$ admits  
$$x\left( k+1 \right) =-\left(\epsilon L_B\left( k \right) \otimes \bm I \right) x\left( k \right) +\left(\epsilon B\otimes \bm I \right) u\left( k \right),
$$
where  $x(k)=[x_1^{\top}(k), x_2^{\top}(k),\dots,x_{\left| \mathcal{A}(0,k) \right|}^{\top}(k)]^{\top}$, $u(k)=[u_1^{\top}(k), u_2^{\top}(k),\dots,u_{m}^{\top}(k)]^{\top}$ and $L_B(k)=L_{\mathcal{A}\text{-}pre}\left( k \right)+\mathbf{diag}(B\cdot\bm 1)$ is a perturbed Laplacian matrix  with   $B=(b_{ip})\in\mathbb{R}^{m\left| {\mathcal{A}}(0,k) \right|}$ and $L_{\mathcal{A}\text{-}pre}(k)$ is   the corresponding Laplacian  matrix 
of $\mathcal{G}_{\mathcal{A}\text{-}pre}(k)$.

\begin{lemma}\label{GOpro29}
	If  $\mathcal{G}_{pre}(0)$ is $(F+1)$-robust, then the smallest eigenvalue 
	$\lambda _1\left( L_B\left( k \right) \right)>0$ is a simple eigenvalue of $ L_B\left( k \right) $ and its associated eigenvector $v _1\left( L_B\left( k \right) \right)$  can be chosen strictly  positive.
\end{lemma}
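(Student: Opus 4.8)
The plan is to recognize $L_B(k)$ as a grounded (or Dirichlet) Laplacian-type matrix and invoke Perron--Frobenius theory applied to a suitable nonnegative matrix. First I would observe that $L_B(k) = L_{\mathcal{A}\text{-}pre}(k) + \mathbf{diag}(B\cdot\bm 1)$, where $L_{\mathcal{A}\text{-}pre}(k)$ is the (symmetric, since $\mathcal{G}_{\mathcal{A}\text{-}pre}(k)$ is undirected) Laplacian of a connected graph by Proposition~\ref{GOth1}, and $\mathbf{diag}(B\cdot\bm 1)$ is a nonzero diagonal nonnegative matrix (at least one agent is a virtual leader). Thus $L_B(k)$ is a symmetric, weakly diagonally dominant M-matrix with at least one strictly dominant row. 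Consider the matrix $M \triangleq cI - L_B(k)$ for $c>0$ chosen large enough (e.g. $c > \max_i [L_B(k)]_{ii}$) so that $M$ is entrywise nonnegative. Since $\mathcal{G}_{\mathcal{A}\text{-}pre}(k)$ is connected, $M$ is irreducible, so by the Perron--Frobenius theorem its spectral radius $\rho(M)$ is a simple eigenvalue with a strictly positive eigenvector. Translating back, $\lambda_1(L_B(k)) = c - \rho(M)$ is a simple eigenvalue of $L_B(k)$ with the same strictly positive eigenvector $v_1(L_B(k))$.

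Next I would establish positivity of $\lambda_1(L_B(k))$, i.e. that $L_B(k)$ is nonsingular (equivalently positive definite). For any $x\neq 0$, $x^\top L_B(k) x = x^\top L_{\mathcal{A}\text{-}pre}(k) x + \sum_i (B\cdot\bm 1)_i x_{(i)}^2 \geq 0$, and equality forces $x$ to be constant on the connected graph (from the Laplacian term) and simultaneously $x_{(i)} = 0$ for every grounded agent $i$ (from the diagonal term); since at least one such agent exists, $x = 0$. Hence $L_B(k)$ is positive definite and $\lambda_1(L_B(k)) > 0$. Combined with the Perron--Frobenius step, this yields simplicity and a strictly positive eigenvector, completing the proof. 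The role of the hypothesis ``$\mathcal{G}_{pre}(0)$ is $(F+1)$-robust'' is entirely funneled through Proposition~\ref{GOth1}, which guarantees that $\mathcal{G}_{\mathcal{A}\text{-}pre}(k)$ is connected; that connectivity is what makes $M$ irreducible and makes the quadratic-form argument bite.

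The main obstacle, I expect, is being careful about the exact structure of $B$ and the grounding: one must verify that $B\cdot\bm 1$ is genuinely nonzero (which holds because at least one agent in $\mathcal{A}(0,k)$ is designated a virtual leader with $b_{ip}=1$) and that the equality case in the quadratic form really collapses, which hinges on connectivity rather than just $1$-robustness per se. A secondary subtlety is that the eigenvector positivity should be stated up to sign/scaling — ``can be chosen strictly positive'' — which is exactly the Perron vector normalization, so the phrasing in the statement is the natural one. One should also note that $L_B(k)$ need not be a standard graph Laplacian (row sums are not all zero), so one cannot simply quote a ``Laplacian of a connected graph has simple zero eigenvalue'' fact; the grounded/Dirichlet-Laplacian viewpoint via $M = cI - L_B(k)$ and Perron--Frobenius is the clean way around this, and it is the argument I would write out.
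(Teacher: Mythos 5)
Your proposal is correct, and it reaches the conclusion by the same essential reduction as the paper: the $(F+1)$-robustness hypothesis is used only through Proposition~\ref{GOth1} to conclude that $\mathcal{G}_{\mathcal{A}\text{-}pre}(k)$ is connected. The difference is what happens after that. The paper's proof stops there and simply invokes Lemma~1 of \cite{shao2023distributed}, which packages exactly the grounded-Laplacian spectral fact you set out to prove; your proposal instead proves that fact from scratch, via the shift $M=cI-L_B(k)$ with $c>\max_i [L_B(k)]_{ii}$, irreducibility of $M$ from connectivity, Perron--Frobenius for simplicity and strict positivity of the eigenvector, and the quadratic-form argument (connectivity forces a null vector of $L_{\mathcal{A}\text{-}pre}(k)$ to be constant, the grounded diagonal entry forces it to vanish) for $\lambda_1(L_B(k))>0$. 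Both steps of your argument are sound: the off-diagonal entries of $L_B(k)$ are nonpositive so $M$ is entrywise nonnegative, symmetry of $L_{\mathcal{A}\text{-}pre}(k)$ (the pre-discriminative subgraph is undirected by construction) makes the eigenvalue correspondence $\lambda_1(L_B(k))=c-\rho(M)$ clean, and you correctly verify the one hypothesis the citation hides, namely that $\mathbf{diag}(B\cdot\bm 1)$ is nonzero because a virtual leader with $b_{ip}=1$ exists in $\mathcal{A}(0,k)$. What the paper's route buys is brevity; what yours buys is a self-contained argument that makes visible exactly which ingredients (grounding, connectivity, Perron--Frobenius) deliver positivity, simplicity, and the positive eigenvector, rather than delegating them to an external lemma.
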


\begin{proof}
	By   \textit{Proposition}   $\ref{GOth1}$,  $\mathcal{G}_{\mathcal{A}\text{-}pre}(k)$  is connected.   The statement then follows immediately from  \textit{Lemma} 1 in \cite{shao2023distributed}. 
\end{proof}

In~\cite{shao2023distributed}, for the original neighbor set   $N^{+}_{i}(k)$, each agent selects in-neighbors satisfying $v_{1(j)}(L_B(k)) < v_{1(i)}(L_B(k))$. This strategy prunes redundant edges, and accelerates system convergence while preserving network connectivity. However, it offers no protection against  adversarial attacks, whose misreported states can render the criterion insecure and destabilize the system. Therefore, the subsequent investigation centers on an active neighbor selection framework  against  attacked  agents.

\subsection{Design of Active Secure Neighbor Selection Strategy}\label{GOsubsec3}
\vspace{0.15cm}
Here  an   ASNS strategy  is designed  to reconstruct  $\mathcal{G} \left( k \right) $ and ensure  the resilient consensus. 

The ASNS strategy begins with the pre-discriminative graph reconstruction executed by \textbf{\emph{Algorithm} \ref{al01e}}, during which attacked agents  are exposed  and the detection results  are broadcasted. Subsequently, normal agents actively establish communication links by selecting secure neighbors, rather than passively removing untrusted ones. The detailed procedure of the ASNS strategy is presented in \textbf{\emph{Algorithm} \ref{al01}}. Specifically, at each time $k$, the main process implemented by each normal agent $i$ in $\mathcal{A}(0,k-1)$ is described  as follows.

\textbf{Attack detection and broadcast ({Step} 6)} and \textbf{Pre-discriminative graph construction  ({Step} 9):} The detailed procedure has been provided in \textbf{\emph{Algorithm}~\ref{al01e}}.

\textbf{Active secure neighbor selection  ({Steps} 11-19):}  
Based on  the pre-discriminative graph $\mathcal{G}_{pre}(k)$,   a  virtual leader $\tilde{i}$ in $\mathcal{A}(0,k)$ is  first  selected  and the perturbed Laplacian is constructed  as $L_B(k)\triangleq  L_{\mathcal{A}\text{-}pre}\left( k \right)+\mathbf{diag}(B\bm 1)$ in the foundation of (\ref{GOeq487})   with $u_p(k)=x_{\tilde{i}}\left( k \right)$. Define  $\psi_i(k)$ as  the set of  agents  in $N_{i\text{-}pre}(0)\cap\mathcal{A}(0,k) \backslash\{\tilde{i}\}$ where each agent $j \in \psi_i(k)$  satisfies  $v _{1(i)}\left( L_B\left( k \right) \right)> v_{1(j)}\left( L_B\left( k \right) \right)$.   Then each agent selects    the set of in-neighbors   satisfying $N^{+}_{i}(k) \subseteq \psi_i(k)$ and $\left| N^{+}_{i}(k) \right| \neq 0$.    

\begin{algorithm}
	\caption{Active Secure Neighbor  Selection  Strategy for Flexible Communication} 
	\label{al01} 
	\begin{algorithmic}[1] 
		\Require $\mathcal{G}(0)$ and $F$.%
		\Ensure $\mathcal{G}(k)$. \\    
		\textbf{Initialization:} Each agent $i\in \mathbb{V}$ initializes its information set $N_i^{+}(0)$ and $x_i(0)$; $\mathcal{A} (0,0)=\mathbb{V}$;  $\mathcal{B}_{i} (0)=\varnothing$;\\
		\textbf{Iteration:} 
		\For{$k>0$} 
		\For{each normal agent $i \in \mathcal{A} (0,k-1)$}
		\State \hspace{-11mm}\textbf{\textit{\underline{(Attack detection and broadcast)}}}   
		\State \textbf{Steps 4-8} in \textbf{\emph{Algorithm}  \ref{al01e}};
		\If {$\mathcal{A}(0,k) \neq \mathcal{A}(0,k-1)$}
		\State \hspace{-16mm}\textbf{\textit{\underline{(Pre-discriminative graph  construction)}}}  
		\State \textbf{Step 11} in \textbf{\emph{Algorithm}  \ref{al01e}};
		\State \hspace{-16mm}\textbf{\textit{\underline{(Active secure neighbor selection)}}}
		\State Set an agent $\widetilde{i}$ in $\mathcal{A}(0,k)$ as the virtual leader;
		\State Calculate $L_{{B}}(k)$;
		\For{each $j \in N_{i\text{-}pre}(0)\cap\mathcal{A}(0,k) \backslash\{\tilde{i}\}$}   
		\If{$v _{1(i)}\left( L_B\left( k \right) \right)>v_{1(j)}\left( L_B\left( k \right) \right)$}  
		\State Classify  agent $j$ into  $\psi_i(k)$ which is the pre-discriminative neighbor selection set of agent $i$;
		\EndIf  
		\EndFor
		\State Set $N_i^{+}(k)=\varnothing$;
		\State  Construct  the  communication  graph  $\mathcal{G}(k)$:  choose   the set of in-neighbors   satisfying $N_{i}^{+}(k) \subseteq \psi_i(k)$ and $\left| N^{+}_{i}(k) \right| \neq 0$.	
		\EndIf
		\EndFor 
		\EndFor  
	\end{algorithmic}  
\end{algorithm}

The performance of the above ASNS strategy  is examined next. We first analyze the network connectivity  of  $\mathcal{G}_{\mathcal{A}}(k)$ which is defined as the  subgraph  induced by the  agents in ${\mathcal{A}}(0,k)$ corresponding  to 	$\mathcal{G}(k)$ at time $k$,  as depicted in  Fig. $\ref{GOfig23412w0}$(b).
The issue  of convergence will be investigated in terms of  the above graph connection performance  analysis.

Now we   discuss  the feasibility  of \textbf{\emph{Algorithm}  \ref{al01}} by deriving the conditions that guarantee every agent except the rooted one  in $\mathcal{A} (0,k)$ can always find at least one admissible in-neighbour.

\begin{proposition}\label{GOth190}
	If $\mathcal{G}_{pre}(0)$ is $(F+1)$-robust, then    $\psi_i(k) \neq \varnothing$ for each  agent $i \in \mathcal{A}(0,k) \backslash\{\tilde{i}\}$.                                 
\end{proposition}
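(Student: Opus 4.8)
The plan is to read off the required inequality directly from the eigenvalue equation of the perturbed Laplacian $L_B(k)$, adapting to the adversarial setting the eigenvector argument used for the attack-free case in \cite{shao2023distributed}, and then to verify that the candidate neighbour it produces satisfies the membership test defining $\psi_i(k)$.

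First I would invoke \emph{Lemma}~\ref{GOpro29}: since $\mathcal{G}_{pre}(0)$ is $(F+1)$-robust, $\lambda_1:=\lambda_1(L_B(k))$ is simple and strictly positive and the associated eigenvector $v_1:=v_1(L_B(k))$ may be chosen with $v_{1(j)}>0$ for every $j\in\mathcal{A}(0,k)$. I would also record the structure of $L_B(k)=L_{\mathcal{A}\text{-}pre}(k)+\mathbf{diag}(B\bm 1)$: the vector $B\bm 1$ has exactly one nonzero entry, equal to $1$, sitting in the coordinate of the virtual leader $\tilde{i}$, so $L_B(k)$ coincides with the Laplacian $L_{\mathcal{A}\text{-}pre}(k)$ of the connected undirected graph $\mathcal{G}_{\mathcal{A}\text{-}pre}(k)$ (connected by \emph{Proposition}~\ref{GOth1}) except for a $+1$ in the $(\tilde{i},\tilde{i})$ entry.

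The core step is then immediate: fix any $i\in\mathcal{A}(0,k)\setminus\{\tilde{i}\}$ and write the $i$-th scalar equation of $L_B(k)\,v_1=\lambda_1 v_1$. Since $i\neq\tilde{i}$, the $i$-th diagonal entry of $L_B(k)$ is unperturbed and equals $\sum_j a_{ij}$ with $j$ ranging over $N_{i\text{-}pre}(0)\cap\mathcal{A}(0,k)$ (the neighbours of $i$ in $\mathcal{G}_{\mathcal{A}\text{-}pre}(k)$), while its $(i,j)$ entry is $-a_{ij}$; rearranging gives $\sum_{j\in N_{i\text{-}pre}(0)\cap\mathcal{A}(0,k)}a_{ij}\bigl(v_{1(i)}-v_{1(j)}\bigr)=\lambda_1 v_{1(i)}$. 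Because every $a_{ij}>0$, $\lambda_1>0$ and $v_{1(i)}>0$, the right-hand side is strictly positive, so at least one summand on the left is positive: there is a candidate neighbour $j^{\star}\in N_{i\text{-}pre}(0)\cap\mathcal{A}(0,k)$ with $v_{1(j^{\star})}<v_{1(i)}$, which is exactly the condition for $j^{\star}\in\psi_i(k)$; hence $\psi_i(k)\neq\varnothing$.

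The step I expect to require the most care is reconciling this with the exclusion of $\tilde{i}$ from $\psi_i(k)$: the argument above yields $j^{\star}\neq\tilde{i}$ unless $i$ is adjacent to $\tilde{i}$ and $\tilde{i}$ is the only candidate neighbour of $i$ with strictly smaller $v_1$-component. To rule that corner case out I would show $v_1$ attains its unique global minimum at $\tilde{i}$: writing the $j$-th equation for a non-leader $j$ as $v_{1(j)}=\sum_h \frac{a_{jh}}{d_j-\lambda_1}\,v_{1(h)}$ with $d_j:=\sum_h a_{jh}$, and using $\lambda_1<d_j$ (which follows from $\lambda_1\le 1/\lvert\mathcal{A}(0,k)\rvert$, obtained by evaluating $x^{\top}L_B(k)x/(x^{\top}x)$ at $x=\bm 1$ and using $\bm 1^{\top}L_{\mathcal{A}\text{-}pre}(k)\bm 1=0$), the coefficients are positive and sum to $d_j/(d_j-\lambda_1)>1$, whence $v_{1(j)}>\min_h v_{1(h)}$, which by connectedness forces the minimum onto $\tilde{i}$. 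Then any $i$ adjacent to $\tilde{i}$ has $v_{1(\tilde{i})}<v_{1(i)}$ and may legitimately take $\tilde{i}$ as an in-neighbour (the edge toward the leader being an admissible choice in Step~18), while every other non-leader agent is covered verbatim by the eigenvalue identity; thus each $i\in\mathcal{A}(0,k)\setminus\{\tilde{i}\}$ always retains at least one secure in-neighbour. Everything else follows from this identity together with the strict positivity supplied by \emph{Lemma}~\ref{GOpro29}.
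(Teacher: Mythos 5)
Your core step coincides with the paper's own proof: the paper likewise writes the $i$-th row of $L_B(k)v_1=\lambda_1(L_B(k))v_1$ and invokes \emph{Lemma}~\ref{GOpro29} ($\lambda_1>0$ and $v_1>0$), phrasing the conclusion as a contradiction (if $v_{1(i)}<v_{1(j)}$ for every candidate $j$, then $\lambda_1(L_B(k))v_{1(i)}<0$), whereas you read off directly the existence of some $j^{\star}$ with $v_{1(j^{\star})}<v_{1(i)}$; these are the same argument.

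Where you depart from the paper is the additional work on the exclusion of $\tilde{i}$, and this is where the proposal does not quite deliver the claim as stated. You are right that $\psi_i(k)$ is defined over $N_{i\text{-}pre}(0)\cap\mathcal{A}(0,k)\backslash\{\tilde{i}\}$, so the eigenvector identity alone may only produce $j^{\star}=\tilde{i}$; the paper's proof simply quantifies over the full set $N_{i\text{-}pre}(0)\cap\mathcal{A}(0,k)$, implicitly treating $\tilde{i}$ as admissible, and never confronts this point. Your patch, however, does not close the gap: proving that $v_1$ attains its unique minimum at $\tilde{i}$ makes the corner case concrete rather than impossible. For example, if $\mathcal{G}_{\mathcal{A}\text{-}pre}(k)$ is a path with $\tilde{i}$ at one end (consistent with \textit{Proposition}~\ref{GOth1}, which only guarantees connectivity after isolation), then the agent adjacent to $\tilde{i}$ has $\tilde{i}$ as its sole candidate with a smaller component, so $\psi_i(k)=\varnothing$ under the literal definition; declaring the edge toward $\tilde{i}$ an admissible selection amends the selection rule of \textbf{\emph{Algorithm}~\ref{al01}} (which draws $N_i^{+}(k)$ from $\psi_i(k)$, built with $j\neq\tilde{i}$) rather than proving the stated proposition. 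In short, either one reads $\psi_i(k)$ without the exclusion of $\tilde{i}$, the reading under which your eigenvalue identity (and the paper's proof) already finishes the job and your third paragraph is unnecessary, or one keeps the literal definition, in which case the statement itself fails in that corner case and it is the definition, not the proof, that needs repair. A minor secondary point: your step $\lambda_1<d_j$ combines $\lambda_1\le 1/\lvert\mathcal{A}(0,k)\rvert$ with an implicit lower bound $d_j\ge 1$, which holds for unit edge weights but not for arbitrary positive weights; since it only supports the part of the argument that is not actually needed, this is not the main concern.
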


\begin{proof}
	Based on the transformed system $(\ref{GOeq487})$, we  proceed   by contradiction. For simplicity, we omit time $k$   hereafter.
	
	Suppose that  there exists  an agent $i \in \mathcal{A}(0,k) \backslash\{\tilde{i}\}$, such that $v_{1(i)}<v_{1(j)}$, for all $j\in N_{i\text{-}pre}(0)\cap\mathcal{A} (0,k)$.  For the $i$-th row of $L_B v_1=\lambda_1(L_B)v_1$, we have 
    	\begin{equation}\label{GOeq4871}
		\begin{aligned}
			\bigg(\sum_{j\in\varXi(0,k) }l_{ij}\bigg)v_{1(i)}-\sum_{j\in\varXi(0,k)}l_{ij}v_{1(j)}=\lambda_1(L_B)v_{1(i)}.
		\end{aligned}
	\end{equation}
    where $\varXi (0,k)\triangleq N_{i\text{-}pre}(0)\cap \mathcal{A} (0,k)$.
	
	If  $v_{1(i)}<v_{1(j)}$ for all $j\in N_{i\text{-}pre}(0)\cap\mathcal{A} (0,k)$, one gets $\lambda_1(L_B)v_{1(i)}<0$, which is  a contradiction with \textit{Lemma} $\ref{GOpro29}$. Thus, $\psi_i(k) \neq \varnothing$ for each  agent $i \in \mathcal{A}(0,k) \backslash\{\tilde{i}\}$.
\end{proof}

The network connectivity among   normal agents $\mathcal{G}_{\mathcal{A}}(k)$ is now guaranteed. By leverage of \cite{ishii2022overview},   it is indicated that  under  the ASNS strategy and  the condition that  $\mathcal{G}_{pre}(0)$ is $(F+1)$-robust, $\mathcal{G}_{\mathcal{A}}(k)$ is ensured  to contain  a spanning tree for all $k$.

Under  the ASNS strategy,  there exists no edge between agents in 	$\mathcal{B}(0,k)$ and $\mathcal{A} (0,k)$. In other words, the agents in $\mathcal{A}(0,k)$ will not be affected by the attackers. Thus, the state evolution of agents in $\mathcal{A} (0,k)$ is governed by 
\begin{equation}\label{GOwq127}
	\begin{aligned}
		x\left( k+1 \right) =\left( \bm I -\epsilon L_{\mathcal{A}\text{-}pre}\left( k \right)\otimes \bm I \right) x\left( k \right),
	\end{aligned}
\end{equation}
where $x\left( k \right) \in \mathbb{R} ^{\left| \mathcal{A} (0,k) \right|}$ and $\boldsymbol{I}\in {\mathbb{R} ^{\left| \mathcal{A} (0,k) \right|\times {\left| \mathcal{A} (0,k) \right|}}}
$.

The resilient-consensus property is formally established in the next theorem.

\begin{theorem}\label{GOth330}
	Consider the  MASs $(\ref{GOeq499})$ subject to   Byzantine attacks $(\ref{GOeq4117})$. Under  the ASNS strategy  and  Assumption      $\ref{GOas02}$, if $\mathcal{G}_{pre}(0)$ is $(F+1)$-robust, the  resilient consensus can be achieved by agents in $\bar{\mathcal{A}}$.
\end{theorem}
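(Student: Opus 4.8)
The plan is to reduce the resilient-consensus claim to a standard convergence result for linear time-varying consensus dynamics over switching digraphs that contain a common rooted spanning tree. The key structural facts have already been assembled: by \emph{Proposition}~\ref{GOth190} together with the remark following it, under the ASNS strategy with $\mathcal{G}_{pre}(0)$ being $(F+1)$-robust, the graph $\mathcal{G}_{\mathcal{A}}(k)$ induced on $\mathcal{A}(0,k)$ contains a spanning tree for every $k$, and there is no edge between $\mathcal{B}(0,k)$ and $\mathcal{A}(0,k)$, so the normal agents' states obey the autonomous recursion~(\ref{GOwq127}), $x(k+1)=(\bm I-\epsilon L_{\mathcal{A}\text{-}pre}(k)\otimes\bm I)x(k)$. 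The remaining work is (i) to argue that only finitely many topology switches occur, so that after some finite time $k^{\ast}$ the dynamics are governed by a fixed Laplacian on a fixed vertex set, and (ii) to invoke the spanning-tree convergence theorem on the resulting time-invariant (or eventually time-invariant) system to conclude $\lim_{k\to\infty}\lVert x_i(k)-x_j(k)\rVert=0$ for all $i,j\in\mathcal{A}(0,k^{\ast})\supseteq\bar{\mathcal{A}}$.

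First I would fix notation and observe that reconstruction of $\mathcal{G}_{pre}(k)$ and $\mathcal{G}(k)$ is triggered only when $\mathcal{A}(0,k)\neq\mathcal{A}(0,k-1)$; since $\mathcal{A}(0,k)$ is nonincreasing in $k$ and $\mathcal{A}(0,k)\supseteq\bar{\mathcal{A}}\neq\varnothing$, it stabilises at some set $\mathcal{A}^{\ast}$ after a finite number of switch times $k_1<\dots<k_{\bar s}=:k^{\ast}$. For $k\geq k^{\ast}$ the candidate graph, the virtual-leader choice, the perturbed Laplacian $L_B$, its eigenvector $v_1$, and hence the selected sets $N_i^{+}(k)$ and $L_{\mathcal{A}\text{-}pre}(k)$ are all time-invariant; call the common Laplacian $L^{\ast}$. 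Next I would verify the step-size admissibility: each row sum of $L^{\ast}$ equals the (finite) in-degree weight of agent $i$ in $\mathcal{G}_{\mathcal{A}}(k^{\ast})$, which is positive for every non-root agent by \emph{Proposition}~\ref{GOth190}, so $\epsilon\in\bigl(0,1/\max_i\sum_{j\in N_i^{+}}a_{ij}\bigr)$ makes $M^{\ast}:=\bm I-\epsilon L^{\ast}$ a nonnegative row-stochastic matrix whose graph equals $\mathcal{G}_{\mathcal{A}}(k^{\ast})$ (with self-loops at every vertex). Then I would apply the classical result (see \cite{ren2008distributed,ishii2022overview}) that a row-stochastic matrix whose associated digraph has a spanning tree and positive diagonal is SIA, so $\lim_{t\to\infty}(M^{\ast})^{t}=\bm 1 c^{\top}$ for some probability vector $c$; tensoring with $\bm I$ and iterating from the state $x(k^{\ast})$ gives $x(k)\to(\bm 1 c^{\top}\otimes\bm I)x(k^{\ast})$, i.e. all normal-agent states converge to the common value $(c^{\top}\otimes\bm I)x(k^{\ast})$, which yields resilient consensus for $\bar{\mathcal{A}}\subseteq\mathcal{A}^{\ast}$ in the sense of \emph{Definition}~\ref{GOde2}.

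The main obstacle I anticipate is the finiteness-of-switches argument: one must rule out an infinite sequence of detection-triggered reconstructions. This is handled by noting that each switch strictly shrinks $\mathcal{A}(0,k)$ (a newly detected Byzantine agent is permanently removed, since $\mathcal{B}(0,k)$ is monotone), and $\mathbb{V}$ is finite, so at most $|\bar{\mathcal{B}}|$ switches can occur; Assumption~\ref{GOas02} additionally guarantees no attack at $k=0$, pinning down the initial condition $\mathcal{A}(0,0)=\mathbb{V}$. A secondary point needing care is that during the transient $k<k^{\ast}$ the dynamics are genuinely time-varying, but since each $M(k):=\bm I-\epsilon L_{\mathcal{A}\text{-}pre}(k)$ restricted to the surviving agents is row-stochastic with positive diagonal, the finite product over $[0,k^{\ast}]$ is just a bounded linear map and does not affect the asymptotic behaviour; alternatively one can simply restart the convergence analysis from time $k^{\ast}$. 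I would close by remarking that the detector~(\ref{GOeqw117}) indeed identifies every Byzantine in-neighbor exactly (two-hop consistency check), so no normal agent is ever falsely removed and $\mathcal{A}^{\ast}=\bar{\mathcal{A}}$ under persistent attacks, though the consensus conclusion only needs $\bar{\mathcal{A}}\subseteq\mathcal{A}^{\ast}$.
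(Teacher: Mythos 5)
Your proposal is essentially correct, but it takes a genuinely different route from the paper. The paper does not pass to an eventually time-invariant system: it works interval by interval on $[k_s,k_{s+1})$ with a set-valued Lyapunov argument, showing that the convex hull $\Xi(k)$ of the normal states is non-expanding, that no jump occurs at switch instants, and then ruling out $\Xi(k+1)=\Xi(k)$ persisting forever by a case analysis on the extreme-value sets $P_{\min(l)}(k)$, $P_{\max(l)}(k)$ (their cardinalities must strictly decrease because the spanning tree forces some extreme agent to receive a value from outside its extreme set). Your argument instead observes that $\mathcal{B}(0,k)$ is monotone inside the finite set $\bar{\mathcal{B}}$, so only finitely many reconstructions occur; after the last switch $k^{\ast}$ the selected topology, and hence $M^{\ast}=\bm I-\epsilon L^{\ast}$, is fixed, row-stochastic with positive diagonal, and its digraph contains a spanning tree, so the classical SIA result of \cite{ren2008distributed} gives $\lim_{t\to\infty}(M^{\ast})^{t}=\bm 1 c^{\top}$ and consensus on $\mathcal{A}^{\ast}\supseteq\bar{\mathcal{A}}$. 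Both proofs lean on the same structural inputs established before the theorem (Propositions~\ref{GOth1}--\ref{GOth190}, the spanning-tree claim for $\mathcal{G}_{\mathcal{A}}(k)$, and the attack-free recursion~(\ref{GOwq127})), so your reduction is legitimate within this model. What each buys: your route is shorter and rests on a standard, fully rigorous matrix-product theorem, but it hinges critically on the finiteness-of-switches step, which is valid here only because detected agents are isolated permanently and $\bar{\mathcal{B}}$ is fixed and finite; the paper's convex-hull contraction does not need the topology ever to freeze, so it is the argument one would have to fall back on if the switching were persistent (e.g., topology re-selection at every $k_s$ without the attacker pool being exhausted), at the price of the more delicate three-case bookkeeping. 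One small caution: the transient for $k<k^{\ast}$ involves state vectors of changing dimension as agents are removed, so "restart the analysis at $k^{\ast}$", as you indicate in your alternative, is the cleaner phrasing rather than composing the maps across removals.
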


\begin{proof}
	Let 
	$\left\{ k_1,\dots ,k_{s}, k_{s+1}\dots \right\}$ be  the discrete time instants at which the attackers change their target set; i.e., $\mathcal{B} (0,k_s)\neq \mathcal{B} (0,k_s-1)$. At each time $k$ and  for  the $l$-th  dimension, we denote  the  maximum and  minimum state values  of agents in $\mathcal{A}(0,k)$ as    $x_{max(l)}(k)$ and $x_{min(l)}(k)$.  Let $P_{min(l)}(k)$ and $P_{max(l)}(k)$ be  the sets of agents in $\mathcal{A} (0,k)$  holding the state value as $x_{min(l)}(k)$ and $x_{max(l)}(k)$, respectively.   
	
	For convenience, rewrite $(\ref{GOeq4919})$ as  
	\begin{equation}\label{GOeqp71}
		\begin{aligned}
			x_i\left( k+1 \right) =(1-\epsilon l_{ii})x_i\left( k \right) -\epsilon \sum_{j\in {N_i^{+}}\left( k \right)}{l_{ij}\left( k \right) x_j\left( k \right)}.
		\end{aligned}
	\end{equation} 

    \vspace{0.1cm}
    
	During  interval $[k_{s\,\,},k_{s+1})$,   $\epsilon \in (0,\mathop {\frac{1}{\max l_{ii}}})$ ensures that all  coefficients  of $x_{i}(k)$ in $(\ref{GOeqp71})$ are  nonnegative and  sum to one. Hence, the state value of   each agent in $\mathcal{A} (0,k_s)$ is a convex combination of its own value and the values received   from its neighbors under  protocol $(\ref{GOeq499})$. Therefore, it has   $\Xi (k+1)\subseteq \Xi (k)$ for all  $k\in[k_{s\,\,},k_{s+1})$.  Besides, since there is no state jump occur at  instant $k_s$, we also have   $ \Xi (k_s^{+})=\Xi (k_s^{-})$. Then, the following outline of analysis is provided.

    \vspace{0.1cm}
    
    Since we have already established $\Xi(k+1) \subseteq \Xi(k)$ for the entire  process, 
    to verify resilient consensus, it remains to show that the time interval satisfying  $\Xi(k+1) = \Xi(k)$ is bounded. 
    To this end, since  it is obvious  that  $ \Xi (k_s^{+})=\Xi (k_s^{-})$,  the subsequent  proof proceeds with each interval $[k_s, k_{s+1})$ and is carried  at  each dimension of state $x_i\left( k\right)$ in $(\ref{GOeqp71})$. For the \( l \)-th dimension, we focus on the agents holding extreme values, i.e.,  \(i \in P_{\min(l)}(k) \cup P_{\max(l)}(k)\). Three  exhaustive cases are involved  at each time step \( k \):
	
    \vspace{0.1cm}
    
	\textbf{Case 1)}   $N_i^{+}(k)\cap 
	P_{min(l)}(k)=\varnothing,~\forall i\in P_{min(l)}(k)$ and $N_i^{+}(k)\cap 
	P_{max(l)}(k)=\varnothing,~\forall i\in P_{max(l)}(k)$;

    \vspace{0.1cm}
	
	\textbf{Case 2)} $N_i^{+}(k)\cap 
	P_{min(l)}(k)=\varnothing,~\forall i\in P_{min(l)}(k)$ and   $N_i^{+}(k)\cap 
	P_{max(l)}(k)\neq\varnothing,~\exists i\in P_{max(l)}(k)$; 
	
    \vspace{0.1cm}
    
	\textbf{Case 3)} $N_i^{+}(k)\cap 
	P_{min(l)}(k)\neq\varnothing,~\exists  i\in P_{min(l)}(k)$ and   $N_i^{+}(k)\cap 
	P_{max(l)}(k)\neq\varnothing,~\exists i\in P_{max(l)}(k)$.
    
    \vspace{0.1cm}
    
	Note that resilient consensus is achieved   if  $x_{\min(l)}(k) = x_{\max(l)}(k)$.  In what follows we consider the situation  that at least one dimension $l$ satisfies  $x_{min(l)}(k) \neq x_{max(l)}(k)$  before resilient consensus is  achieved.

    \vspace{0.1cm}
	
	\textbf{Case 1).} For every agent $i \in P_{min(l)}(k)\cup P_{max(l)}(k)$, the ASNS strategy guarantees an in-neighbor $i \in P_{min(l)}(k)\cup P_{max(l)}(k)$ such that $l_{ij}\left( k \right)>0$. This indicates that for $i\in P_{min(l)}(k)\cup P_{max(l)}(k)$, we get  $$x_{i(l)}(k+1)\in (x_{min(l)}(k), x_{max(l)}(k)).$$
	Agents not in  $P_{min(l)}(k)\cup P_{max(l)}(k)$ trivially satisfy the same inclusion, so   $\Xi (k+1)\subset \Xi (k)$  for all   $k\in[k_{s\,\,}, k_{s+1}).$

    \vspace{0.1cm}
	
    \textbf{Case 2).} For every  $i \in P_{min(l)}(k)$, the same reasoning as in Case 1) yields $x_{i(l)}(k+1)\in (x_{min(l)}(k), x_{max(l)}(k)), i\in P_{min(l)}(k)$.  For each agent $i \in P_{max(l)}(k)$, since  $N_i^{+}(k)\cap 
	P_{max(l)}(k)\neq\varnothing,~\exists i\in P_{max(l)}(k)$, the worst outcome  is  $x_{i(l)}(k+1)=  x_{max(l)}(k),~ i\in P_{min(l)}(k)$. Therefore, it is derived that   $\Xi (k+1)\subset \Xi (k)$ before achieving resilient consensus. As for the subcase  that $N_i^{+}(k)\cap 
	P_{max(l)}(k)=\varnothing,$  for all  $i\in P_{max(l)}(k)$ while    $N_i^{+}(k)\cap 
	P_{min(l)}(k)\neq\varnothing,~\exists i\in P_{min(l)}(k)$,
	then at least   one agent $ i\in P_{{max}(l)}(k)$ will be pulled strictly inside the interval, so  $\Xi (k_s+1)\subset \Xi (k_s)$.

    \vspace{0.1cm}
	
\textbf{Case 3).} Assume, for contradiction, that $\Xi (k+1)= \Xi (k)$ for $[\overline{k},+\infty)$. Then $x_{min(l)}(k)=x_{min(l)}(\overline{k})$ and $x_{max(l)}(k)=x_{max(l)}(\overline{k})$
for $k\in[\overline{k},+\infty)$,  which  further implies that  $P_{max(l)}(k)$ and  $P_{min(l)}(k)$ remain empty. 
While in alignment with the ASNS strategy,  $\mathcal{G}_{\mathcal{A}}(k)$ contains a spanning tree. Hence  some   agent ${i}$ in $\mathcal{A}(0,k)\backslash\left\{\tilde{i}\right\}$ has     in-neighbors   outside its own  set which are $P_{max(l)}(k)$ or $P_{min(l)}(k)$. Furthermore,  for    $(\ref{GOeqp71})$, since  $x_i\left( k+1 \right)$ is the linear combination of $x_i\left( k\right),~i\in \mathcal{A}(0,k)$ and $\epsilon \in (0,\mathop {\frac{1}{\max l_{ii}}})$, the cardinalities of   $P_{min(l)}(k)$ and  $P_{max(l)}(k)$ strictly decrease  until $x_{min(l)}(k)=x_{max(l)}(k)$, contradicting the assumption.

\vspace{0.1cm}
    
	To sum up, we have $\Xi (k+1)\subseteq \Xi (k)$ with  $ \Xi (k_s^{+})=\Xi (k_s^{-})$ and the equality  $\Xi (k+1)= \Xi (k)$ can persist only for a bounded time.   In this way, the  resilient consensus  is guaranteed, which completes the proof.
\end{proof}
\vspace{0.2cm}

\begin{remark}	
	The ASNS strategy  constructs a neighbor selection scheme  such that the resulting communication topology is  $p$-robust with $p \leqslant F+1$. This significantly relaxes the $(2F+1)$-robustness required by the time-invariant  topology in \cite{leblanc2013resilient}. Consequently, the approach reduces communication overhead while still ensuring consensus.
\end{remark}
\vspace{0.2cm}

\begin{remark}
	Unlike \cite{luo2023secure} and \cite{ZHAO2023110934},
	the ASNS strategy no longer presumes  that the underlying graph      among  normal agents should keep  the connection performance.  Instead, it actively builds  a  directed spanning tree. This design facilitates implementation, as the adversary's target behavior remains unknown.
\end{remark}

\vspace{0.1cm}

\begin{remark}
     The topology dynamics induced by the ASNS strategy  present greater challenges to  adversaries. Some sophisticated attacks, such as stealthy attacks \cite{10475155} and ripple attacks \cite{zhang111307Ripple}, rely on the topological information. The topology  dynamics of our work disrupts the adversaries' knowledge of the system model, thereby hindering the design of targeted attacks aligned with the system behavior.     
\end{remark}

\vspace{0.2cm}
Through the above analysis, it is evident that under the proposed ASNS  strategy, the communication cost  of network can be adjusted while maintaining resilience against attacks.      
Specifically, the number of  in-neighbors corresponding to each normal agent is adjusted  with $\psi_i(k)$, that is,  $N_{i}^{+}(k) \subseteq \psi_i(k)$. In other words, the communication remains flexible. Moreover, because communication overhead is often the dominant cost in real-world MASs, achieving resilience with the lowest possible data exchange is of paramount interest\cite{pirani2023graph}. Motivated by this, we evaluate the total defense cost of ASNS strategy  when communication is minimized.  We first give a formal definition of resilient minimum  communication in the presence of Byzantine agents, following the  idea  in  \cite{weerakkody2017robust}.
\vspace{0.2cm}
\begin{definition}\label{GOdf2}
	\emph{(Resilient Minimum Communication)} The MASs under $\mathcal{G} _{\mathcal{A}}\left( k \right)$  subject to   Byzantine attacks  are  said to achieve   resilient   minimum communication, if 
	$$\left| \mathbb{E} _{\mathcal{G} _{\mathcal{A}}}\left( k \right) \right|=\underset{g\in \mathbb{G}\left( k \right)}{\min }\left| \mathbb{E} _g\left( k \right) \right|,
	$$
	where $\mathbb{G}\left( k \right)$ is the set of  all the communication  graphs  for agents in $\mathcal{A}(0,k)$ that contain a directed spanning tree at time $k$ and  $\mathbb{E} _g\left( k \right)$ is the set of edges corresponding to graph $g$.
\end{definition}  
\vspace{0.2cm}
Next, the minimum communication overhead of the ASNS strategy is quantitatively analyzed. It is first noted that, according to \textit{Proposition}~\ref{GOth190}, under the ASNS strategy, each normal agent is guaranteed to have at least one selected in-neighbor. This structural property enables the exploration of defense mechanisms under minimum  communication cost.       

\vspace{0.1cm}

\begin{proposition}\label{GOthe30}
	Consider the  MASs $(\ref{GOeq499})$ with the  Byzantine attacks $(\ref{GOeq4117})$. Under  the ASNS strategy and  Assumption  $\ref{GOas02}$,  if {$\mathcal{G}_{pre}(0)$} is $(F+1)$-robust and all agents in $\mathcal{A}(0,k)$ except virtual leader  choose  $N_{i}^{+}\left( k \right) =\left\{ \left. j \right|j\in \psi _i\left( k \right) \right\}$ with $\left| N_{i}^{+}\left( k \right) \right|=1$,   $\mathcal{G}_{\mathcal{A}}(k)$  attains   resilient  minimum communication and    resilient consensus is  achieved.
\end{proposition}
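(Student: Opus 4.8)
The plan is to handle the two conclusions in order, exploiting that the resilient-consensus part reduces to \textit{Theorem}~\ref{GOth330} while the minimum-communication part is a counting argument on the tree induced by the single-in-neighbor rule. For the consensus claim, set $n\triangleq|\mathcal{A}(0,k)|$; by \textit{Proposition}~\ref{GOth190} we have $\psi_i(k)\neq\varnothing$ for every $i\in\mathcal{A}(0,k)\backslash\{\tilde i\}$, so fixing $N_i^{+}(k)=\{j\}$ with some $j\in\psi_i(k)$ is always feasible and obeys $N_i^{+}(k)\subseteq\psi_i(k)$ together with $|N_i^{+}(k)|=1\neq 0$, i.e.\ it is an admissible realization of the ASNS strategy in \textbf{\emph{Algorithm}~\ref{al01}}. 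Since $\mathcal{G}_{pre}(0)$ is $(F+1)$-robust and Assumption~\ref{GOas02} holds, \textit{Theorem}~\ref{GOth330} applies verbatim and yields $\lim_{k\to\infty}\lVert x_i(k)-x_j(k)\rVert=0$ for all $i,j\in\bar{\mathcal{A}}$.

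For the minimality claim I would first establish that the edges selected under this rule form a directed spanning tree of $\mathcal{G}_{\mathcal{A}}(k)$ rooted at $\tilde i$. Let $\sigma(i)\in\psi_i(k)$ denote the unique in-neighbor chosen by $i\in\mathcal{A}(0,k)\backslash\{\tilde i\}$; by the defining property of $\psi_i(k)$ we have $v_{1(\sigma(i))}(L_B(k))<v_{1(i)}(L_B(k))$, so successive applications of $\sigma$ strictly decrease the (finitely many) entries of $v_1(L_B(k))$. Hence the orbit $i,\sigma(i),\sigma^2(i),\dots$ never revisits a vertex --- so no cycle arises among the normal agents --- and must terminate at the unique agent carrying no selected in-neighbor, namely the virtual leader $\tilde i$, which is by hypothesis excluded from the selection. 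Reversing this orbit exhibits a directed path from $\tilde i$ to every $i$, so $\mathcal{G}_{\mathcal{A}}(k)\in\mathbb{G}(k)$; moreover the selected edge set is exactly $\{(\sigma(i),i):i\in\mathcal{A}(0,k)\backslash\{\tilde i\}\}$, which has $n-1$ distinct elements, hence $|\mathbb{E}_{\mathcal{G}_{\mathcal{A}}}(k)|=n-1$.

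Finally, any $g\in\mathbb{G}(k)$ contains a directed spanning tree on the $n$ vertices of $\mathcal{A}(0,k)$, and every such tree already has $n-1$ edges, so $|\mathbb{E}_g(k)|\ge n-1$ for all $g\in\mathbb{G}(k)$ and therefore $\min_{g\in\mathbb{G}(k)}|\mathbb{E}_g(k)|=n-1$. Combining this with $|\mathbb{E}_{\mathcal{G}_{\mathcal{A}}}(k)|=n-1$ gives $|\mathbb{E}_{\mathcal{G}_{\mathcal{A}}}(k)|=\min_{g\in\mathbb{G}(k)}|\mathbb{E}_g(k)|$, which is precisely \textit{Definition}~\ref{GOdf2}; together with the consensus conclusion this completes the argument. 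I expect the only delicate point to be the tree-structure step --- ruling out cycles among normal agents and verifying that every $\sigma$-orbit absorbs into $\tilde i$ rather than into some other in-neighbor-free vertex --- but both follow directly from the strict monotonicity of $v_1(L_B(k))$ along selected edges and from \textit{Proposition}~\ref{GOth190}, which guarantees $\tilde i$ is the sole agent lacking a selected in-neighbor; the remainder is routine counting.
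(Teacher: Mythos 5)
Your proof is correct, and its skeleton matches the paper's: feasibility of the single-in-neighbor choice via \textit{Proposition}~\ref{GOth190}, a spanning-tree argument driven by the strict ordering $v_{1(\sigma(i))}(L_B(k))<v_{1(i)}(L_B(k))$ along selected edges, an edge count, and consensus delegated to \textit{Theorem}~\ref{GOth330}. Where you differ is in how the two graph-theoretic steps are rendered. The paper proves the spanning-tree property by contradiction: it posits a nonempty set $\varpi(k)\subset\mathcal{A}(0,k)\backslash\{\tilde i\}$ unreachable from $\tilde i$, looks at the agent in $\varpi(k)$ with the smallest entry of $v_1(L_B(k))$, and concludes that its selectable in-neighbors cannot lie in $\varpi(k)$, forcing a contradiction; you instead argue directly by following the map $\sigma$ (each agent's unique chosen in-neighbor), using the strict monotonicity of the eigenvector entries to rule out cycles and to show every orbit absorbs into the unique in-neighbor-free vertex $\tilde i$. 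The two arguments rest on exactly the same mechanism (the eigenvector ordering from \textit{Lemma}~\ref{GOpro29}), but your constructive version is arguably cleaner and sidesteps the slightly garbled phrasing in the paper's contradiction step. Secondly, the paper dismisses the minimality claim as ``straightforward'' once each agent picks one in-neighbor, whereas you supply the missing count: the selected edge set has exactly $|\mathcal{A}(0,k)|-1$ elements, while any $g\in\mathbb{G}(k)$ contains a directed spanning tree and hence at least $|\mathcal{A}(0,k)|-1$ edges, so the minimum in \textit{Definition}~\ref{GOdf2} is attained. This explicit counting, plus the observation that $\tilde i$ is by hypothesis the sole agent without a selected in-neighbor, is exactly the detail the paper leaves implicit, so your write-up is a slightly more complete version of the same proof rather than a different one.
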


\vspace{0.1cm}

\begin{proof}
	We proceed by contradiction to show that the graph $\mathcal{G}_{\mathcal{A}}(k)$ contains a spanning tree.
	Suppose that there is a non-empty subset $\varpi (k)$ of ${\mathcal{A}}(0,k) \backslash\{\tilde{i}\}$ that is unreachable  from agent $\tilde{i}$. Consider agent $i\in \varpi (k)$ with the smallest $v _{1(i)}\left( L_B\left( k \right) \right)$ among all agents in $\varpi (k)$. From  the  ASNS strategy,   agent $\tilde{i}$ is left with no selectable in-neighbors, i.e.,  $\psi_i(k) = \varnothing$, yielding    a contradiction.

\vspace{0.1cm}
    
	Next, because every agent in $\mathcal{A}(0,k)$ only chooses one in-neighbor from   $N_{i\text{-}pre}(0)\cap\mathcal{A}(0,k)$.  Thus, it is straightforward that $\mathcal{G}_{\mathcal{A}}(k)$ under the ASNS strategy satisfies resilient minimum communication. The resilient consensus can also be  realized based on  the poof in \textit{Theorem} \ref{GOth330}.
\end{proof}

\vspace{0.1cm}

\begin{remark}	
	Note that existing research primarily focuses on enhancing network communication  redundancy to improve resilience against Byzantine attacks~\cite{leblanc2013resilient,sundaram2018distributed,gong2023resilient}. The study in~\cite{weerakkody2017robust} investigates the minimum communication requirements under zero-dynamics attacks from the perspective of structural system theory. However, limited efforts have been devoted to leveraging minimum defense resources  to counteract Byzantine adversaries.  The proposed approach maintains strong resilience by adding new edges when the spanning tree among normal agents is disrupted, thereby mitigating the adverse effects on network connectivity.
\end{remark}

\vspace{-0.1cm}

\section{Simulations and Discussions}\label{GOsec5}

In this section, we first  elaborate on  the performance of the ASNS strategy. Next, comparative  simulations are carried out  to reveal the superiority  of our results.
\vspace{-0.2cm}
\subsection{Performance of ASNS Strategy under Byzantine attacks}
A directed graph of ten  agents whose initial graph containing  a directed spanning tree is considered. The  set of  compromised  agents is fixed at $\bar{\mathcal{B}}=\left\{ 1,4,9 \right\}$, which corresponds to an $F$-local Byzantine model with $F=2$. In practice, the set of  Byzantine agents  at time $k$, denoted by $\bigcup_{i\in \mathbb{V}} \mathcal{B}_i(k)$, is a subset of the predefined set $\bar{\mathcal{B}}$. For convenience, agents in $\bar{\mathcal{B}} \setminus \left( \bigcup_{i\in \mathbb{V}} \mathcal{B}_i(k) \right)$, which are not actively launching attacks at time $k$, are referred as  dormant Byzantine agents.

Fig. $\ref{GOfig1p0}$(a) depicts  the initial communication graph ${\mathcal{G}}(0)$ under the influence of the  Byzantine agents in $\bar{\mathcal{B}}$.  A key observation is that  the graph of  agents in $\bar{\mathcal{A}}$  has  no directed spanning tree. Consequently, the algorithm proposed in \cite{luo2023secure} becomes ineffective   when all agents in $\bar{\mathcal{B}}$ are compromised. In all simulations, we
set $\epsilon=0.02$.
Besides,  Fig. $\ref{GOfig1p0}$(b) displays   the initial 
pre-discriminative  graph  $\mathcal{G}_{pre}(0)$ which is $3$-robust. 

In fact, $\mathcal{G}_{pre}(0)$ specifies the set of admissible neighbors for all agents, delineating all potential communication links that can be established. The actual communication topology is a subgraph of $\mathcal{G}_{pre}(0)$. For example, ${\mathcal{G}}(0)$ in Fig.~\ref{GOfig1p0}(a) is a subgraph of $\mathcal{G}_{pre}(0)$ in Fig.~\ref{GOfig1p0}(b). It is also worth noting that $\mathcal{G}_{{pre}}(0)$ is not a complete graph; for instance, there is no edge between agents $1$ and $10$.

\begin{figure}[H]
	\vspace{-0.2cm}
	\centering
	\includegraphics[width=\linewidth]{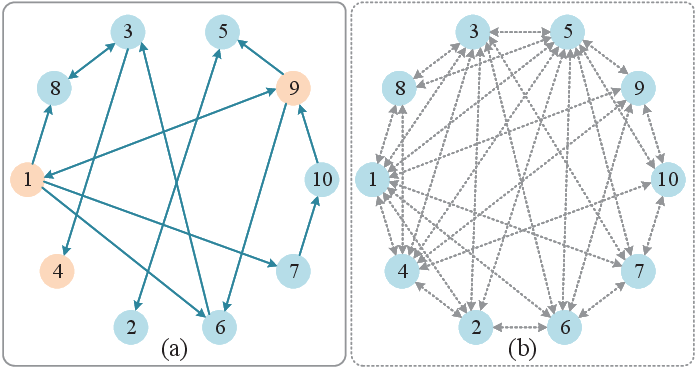}
	\caption{(a) The initial communication graph ${\mathcal{G}}(0)$; (b) The  initial pre-discriminative  graph $\mathcal{G}_{pre}(0)$.}
	\label{GOfig1p0}
\end{figure}

\vspace{-0.05cm}

The Byzantine attacks are designed as follows with  attack targets    changing  moments being    $k_1=120$ and $k_2=400$. The whole evaluation process is given below: 
 \begin{equation}\label{GOeq1a71}
	\begin{aligned}
		&f_{1j}(k)=\left\{\begin{array}{ll}B_{01},&j=8,k\in[120,400),
			\\B_{02},&j=6,k\in[120,400),\\x_1(k-6),&j=7,k\in[120,400),\\A_1x_1(k-3)+5  ,&j=9,k\in[120,400),\end{array}\right.\\               
            &f_{4j}(k)=\left\{\begin{array}{ll}B_{11},&j=2,k\in[400,\infty),
			\\A_1x_4(k)+B_{12},&j=10,k\in[400,\infty),\end{array}\right.\\
		&f_{9j}(k)=\left\{\begin{array}{ll}A_0x_9(k)+B_{01},&j=5,k\in[120,\infty),
			\\A_0x_9(k)+B_{02},&j=6,k\in[120,\infty),
			\\A_0x_9(k),&j=1,k\in[120,\infty).\end{array}\right.\\
	\end{aligned}
\end{equation}
where 
\begin{equation*}\label{GO23p871}
	\begin{aligned}
		\begin{cases}
			A_0=\mathrm{diag}\left\{ 0.03\sin \left( k \right) ,1,0.02\cos \left( k \right) \right\},\\
			A_1=\mathrm{diag}\left\{ 0.07\sin \left( k \right) ,1,0.02\sin \left( k \right) \right\},\\
		\end{cases}
	\end{aligned}
\end{equation*}
and 
\begin{equation*}\label{GO12p871}
	\begin{aligned}
		\begin{cases}
			B_{01}=\left[ \begin{matrix}
				0.02&		0.06&		0.04\\
			\end{matrix} \right] ^{\top},\\
			B_{02}=\left[ \begin{matrix}
				0.12&		0.36&		0.09\\
			\end{matrix} \right] ^{\top},\\
			B_{11}=\left[ \begin{matrix}
				0.12&		0.06&		0.26\\
			\end{matrix} \right] ^{\top},\\
			B_{12}=\cos \left( k \right) \left[ \begin{matrix}
				0.12&		0.36&		0.09\\
			\end{matrix} \right] ^{\top}.\\
		\end{cases}
	\end{aligned}
\end{equation*}

\begin{figure*}[htbp]
	\centering
	\includegraphics[width=0.85\linewidth]{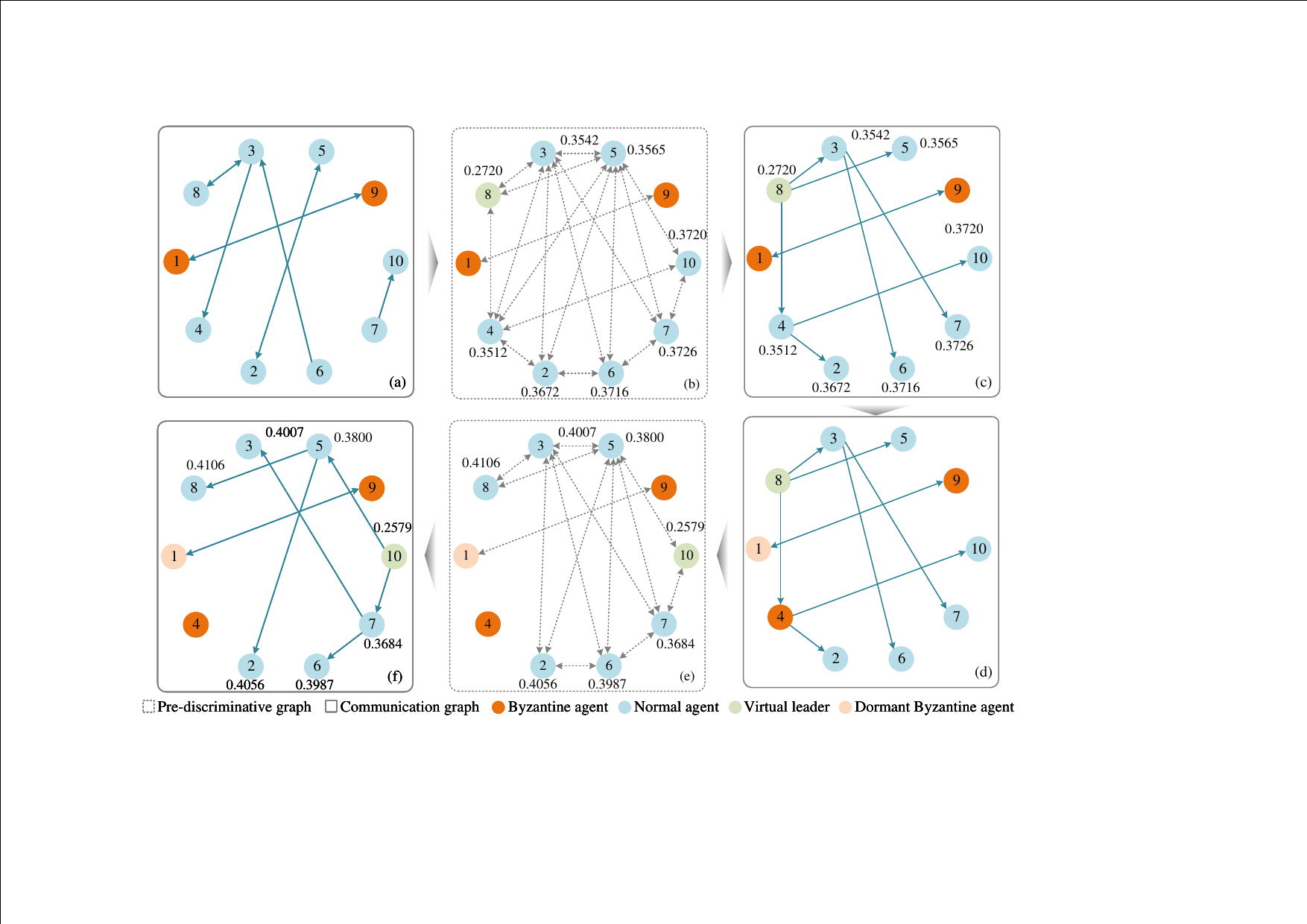}
	\caption{The communication graphs and pre-discriminative graphs: (a) $\mathcal{G}(k_1-1)$; (b) $\mathcal{G}_{pre}(k_1)$; (c) $\mathcal{G}(k_1)$; (d) $\mathcal{G}(k_2-1)$; (e) $\mathcal{G}_{pre}(k_2)$; (f) $\mathcal{G}(k_2)$.}
	\label{GOfig990}
\end{figure*}

$\bm{k=k_1=120}$: Based on the above attack model,  agents $1$ and $9$ are attacked as the Byzantine ones at $k_1=120$, see   Fig. $\ref{GOfig990}$(a). At  $k_1$,  in terms of  the ASNS strategy, Byzantine agents  $1$ and $9$ are isolated with   edges $(1,8)$,  $(1,7)$, $(1,6)$, $(10,9)$, $(9,5)$ and $(9,6)$ being deleted.  It is indicated that  $\mathcal{A} \left(0,k_1 \right) =\left\{ 2,3,4,5,6,7,8,10 \right\}$ such that $\mathcal{A} \left(0,k_1 \right) \neq \mathcal{A} \left(0,k_1-1\right)$.  Then a  pre-discriminative graph $\mathcal{G}_{pre}(k_1)$  is constructed  according to \textbf{\emph{Algorithm} \ref{al01e}} which is plotted in   Fig. $\ref{GOfig990}$(b) {(\textbf{Step 9}  in the ASNS strategy)}. 
The normal agent $8$ is chosen as a  virtual leader such that  $L_B(k_1)$ is formed. Then it follows that $v_1\left(L_B(k_1)\right)=[
0.3672~	0.3542~0.3512~0.3565~		0.3716~	0.3726~		0.2720~	0.3720
]$ (\textbf{Steps 11-12}  in the ASNS strategy).  The communication graph $\mathcal{G}(k_1)$ is then  reconstructed. Each agent $i$ in $\mathcal{A} \left(0,k_1 \right)$   selects at least one  in-neighbor as $N_{i}^{+}(k_1) \subseteq \psi_i(k_1) =\left\{ \left. j \right|v_{1(i)}\left( L_B\left( k_1 \right) \right) >v_{1(j)}\left( L_B\left( k_1 \right) \right) \right\} 
$. Then the new secure  communication graph  $\mathcal{G}(k_1)$ is rebuilt up as Fig. $\ref{GOfig990}$(c)  (\textbf{Steps 13-19}  in the ASNS strategy).
\vspace{0.0cm}

$\bm{k=k_2=400}$: Now the adversaries shift to agents $4$ and $9$.
The virtual leader is designated as agent $10$. The defense procedure is similar to  the above elaboration, which  is   shown in  Figs. $\ref{GOfig990}$(d)-(f). It is worthy to note  that isolating agent~4 disconnects agents~2 and~10 from the rest of the network (see Fig. \ref{GOfig990}(d)). Consequently, the method in~\cite{luo2023secure}, which relies on the connectivity assumption among normal agents, fails to achieve consensus under this condition.
\vspace{0.0cm}

Fortunately, with the help of ASNS strategy, the communication graphs  are rebuilt up  among normal agents. 
The relative error $\sigma _i(k)\triangleq  \left\| \sum_{j\in \bar{\mathcal{A}}}{\left( x_i\left( k \right) -x_j\left( k \right) \right)} \right\|,~i\in\bar{\mathcal{A}}$ and $\sigma _i(k)\triangleq  \left\| \sum_{j\in N_{i}^{+}(0)}{\left( x_i\left( k \right) -x_j\left( k \right) \right)} \right\|,~i\in\bar{\mathcal{B}}$ are provided to  quantify system performance which is illustrated in Fig. $\ref{GOfig9}$. It is found  that   the ASNS strategy mitigates the influence of adversaries and achieves consensus by dynamically forming a directed spanning tree.

\begin{figure}
	\centering
	\includegraphics[width=\linewidth]{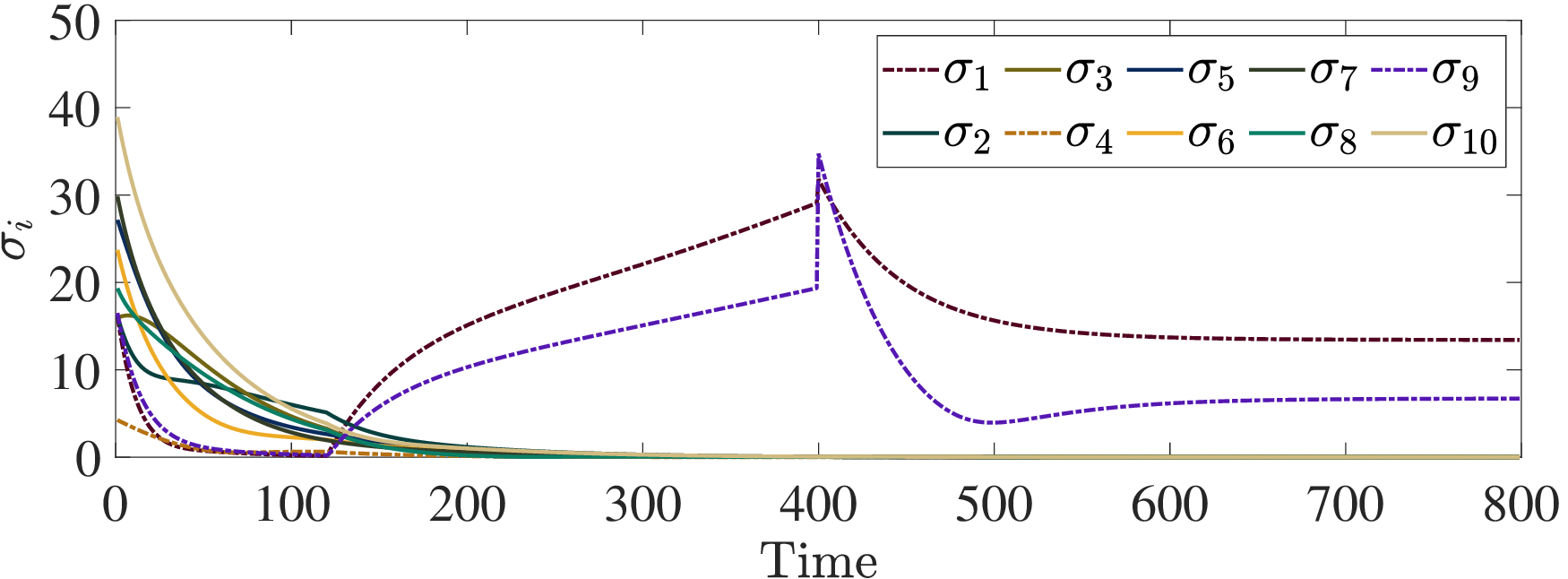}
	\caption{ The relative error of MASs under the 	ASNS strategy against Byzantine attacks.}
	\label{GOfig9}
\end{figure}

\subsection{Performance Comparison with Existing Results}
Consider the graph of Fig. $\ref{GOfig1p0}$(a) as the initial communication graph and the set of  Byzantine  agents is $\left\{ 1,9 \right\}$.  The attack strategies  of agents 1 and 9  are the  same as  $(\ref{GOeq1a71})$ where attacked time  periods   are both  $k\in[120,\infty)$. The removal of agent~9 results in the disconnection of agents~7 and~10 from the other normal agents (see Figs.~\ref{GOfig990}(a)). 

The ASNS strategy is first  contrasted with the method in~\cite{luo2023secure}, which depends on the connectivity assumption among normal agents. With this feature, the method in~\cite{luo2023secure} fails to achieve consensus under this condition. This is because the isolation of Byzantine agents undermines the communication topology of normal agents which  contains  no  directed spanning tree and results in insufficient interactions. Fig. $\ref{GOfig10}$ confirms this statement. However, the resilient consensus can still be achieved under the ASNS strategy by dynamically rebuilding  the communication graph  which is  depicted  in  Fig. $\ref{GOfig10_2}$.

Now, we compare  the ASNS strategy with the  W-MSR algorithms  \cite{leblanc2013resilient,sundaram2018distributed}.  As illustrated in the attack scenario, the Byzantine attack satisfies the $F$-local condition with $F=2$.
It is straightforward that  $\mathcal{G}(k_1-1)$ is 1-robust, not $(2F+1)$-robust  which indicates that the communication resources are insufficient for the W-MSR framework as elaborated in \cite{leblanc2013resilient,sundaram2018distributed}. 
In light of ASNS strategy, the  process of neighbor selection is similar to  the one  from  Fig. $\ref{GOfig990}$(a) to Fig. $\ref{GOfig990}$(c) and the resilient  consensus is satisfied  according to  Fig.  $\ref{GOfig10_2}$. To facilitate comparison, as shown in Fig.~\ref{GOfig12_2},  the W-MSR algorithm \cite{leblanc2013resilient,sundaram2018distributed}  is applied from  \( k = 80 \), in the absence of any  attacks. It is indicated that  even in a nominal setting, the interaction among agents  is disrupted, impeding convergence. Normal agents  fail to achieve resilient consensus under the W-MSR algorithm. It is  because   the network lacks the robustness required to resist attacks, and therefore cannot provide sufficient communication redundancy.

\section{Conclusion}\label{GOsec6}

An active neighbor selection strategy was presented via constructing  the pre-discriminative graph to ensure the consensus of MASs under Byzantine attacks. The flexible communication was  achieved by the adjustment of in-neighbor number. In this way, not only the resilient consensus is     guaranteed but also the communication resources can be  saved. Besides, the assumption about the connection performance among normal agents was released.   Furthermore, an  algorithm was proposed   to achieve the minimum number of edges within the normal agents while preserving  a directed spanning tree.

\begin{figure}[H]
	\centering
	\includegraphics[width=\linewidth]{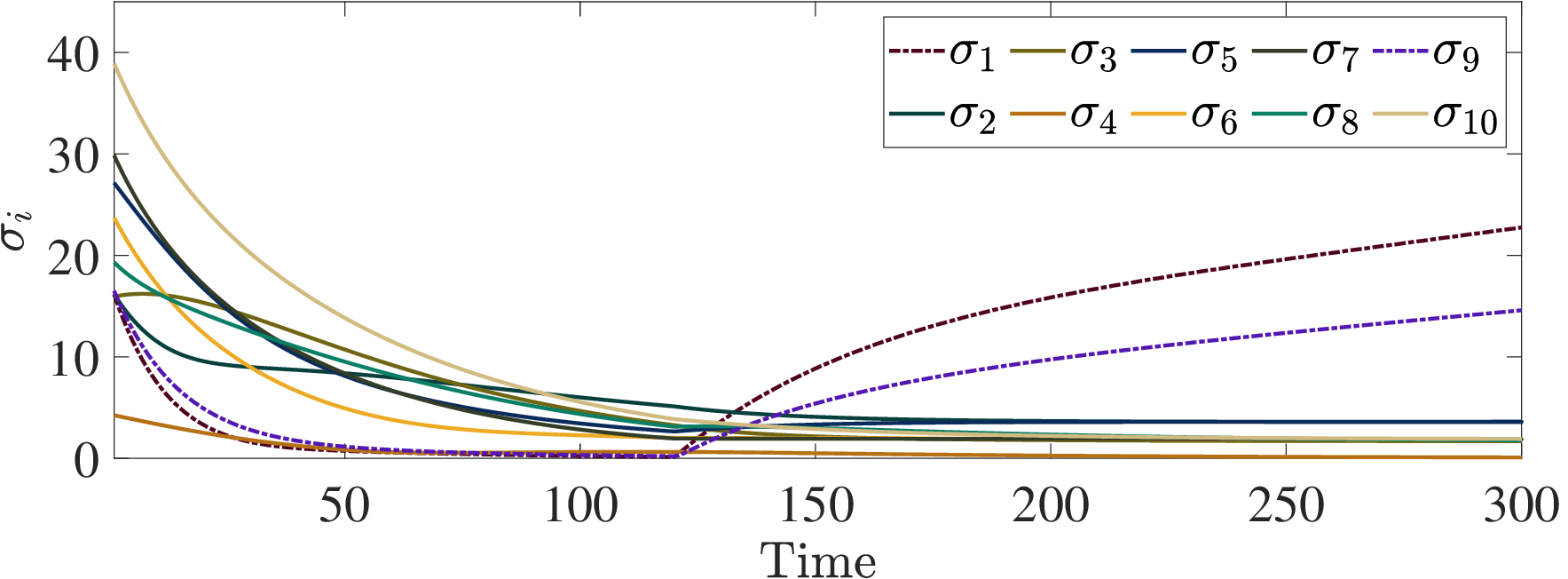}
	\caption{
The relative error of MASs under Byzantine attacks with the defense strategy in \cite{luo2023secure}  based on the connectivity-based assumption among normal agents .}
	\label{GOfig10}
\end{figure}

\begin{figure}[H]
	\centering
	\includegraphics[width=\linewidth]{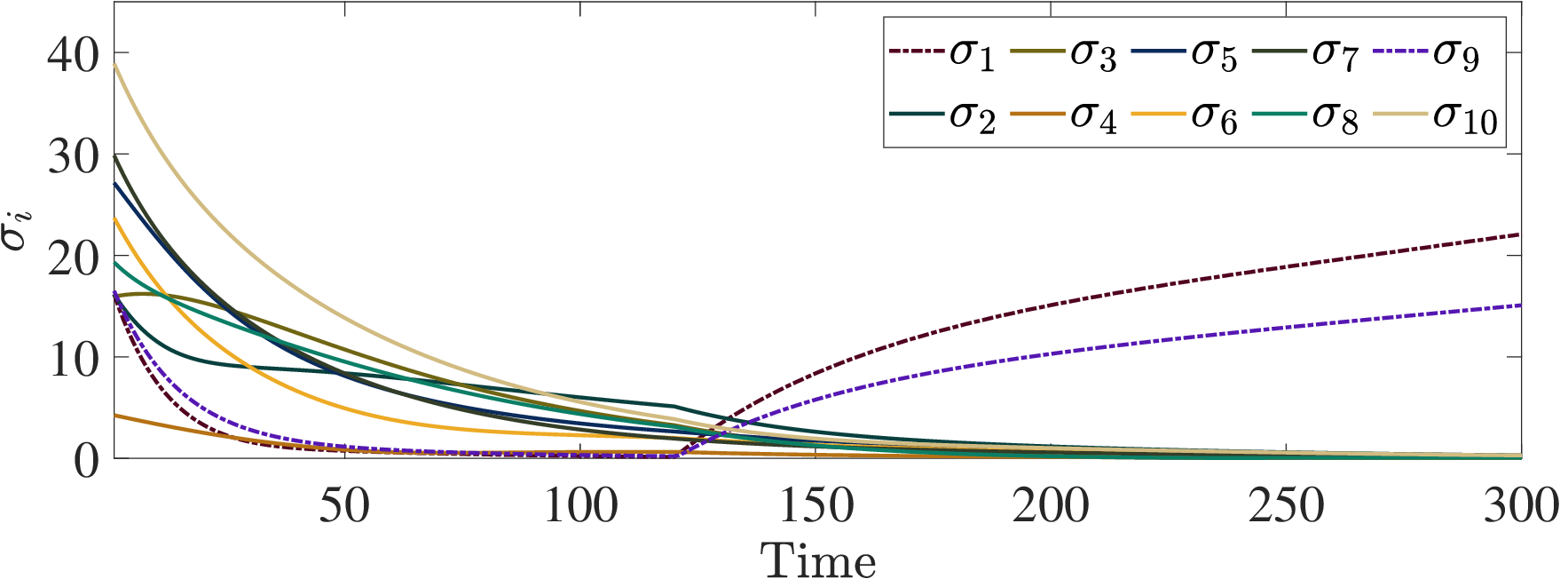}
	\caption{The relative error  of MASs suffering from  Byzantine attacks with the   ASNS strategy.}
	\label{GOfig10_2}
\end{figure}

\begin{figure}[H]
	\centering
	\includegraphics[width=\linewidth]{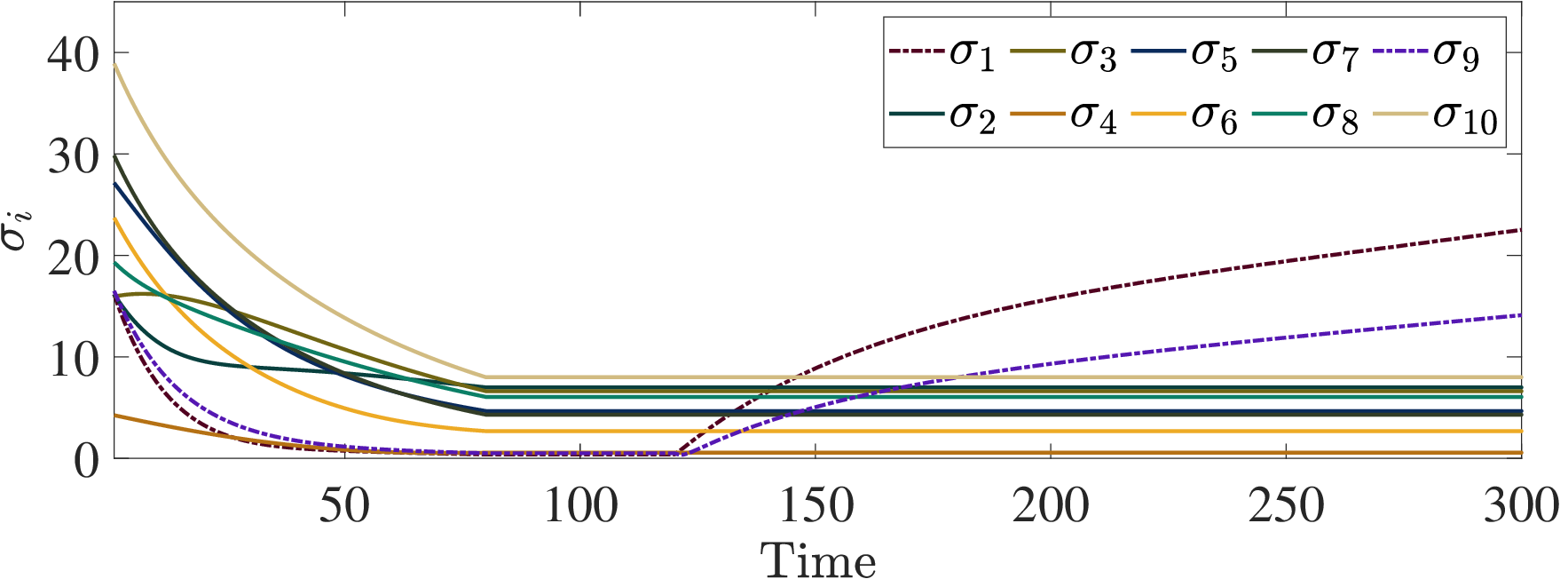}
	\caption{The relative error of MASs under Byzantine attacks with the W-MSR strategy requiring $(2F+1)$-robustness for resilient consensus \cite{leblanc2013resilient,sundaram2018distributed}.}
	\label{GOfig12_2}
\end{figure}

\bibliographystyle{ieeetr}
\bibliography{mybibfile}

\end{document}